\definecolor{forestgreen}{rgb}{0.13, 0.55, 0.13}
\newtheorem{theorem}{Theorem}
\newtheorem{lemma}[theorem]{Lemma}
\newtheorem{claim}[theorem]{Claim}
\newtheorem{observation}[theorem]{Observation}
\newtheorem{corollary}[theorem]{Corollary}
\def\reals{{\mathbb R}}
\def\eps{{\varepsilon}}
\newcommand{\opt}{\text{OPT}}
\newcommand{\vis}{\textit{Vis}}
\newcommand{\seg}[1]{\overline{#1}}
\newcommand{\uv}{\seg{uv}}
\newcommand{\xy}{\seg{xy}}
\newcommand{\xyp}{\seg{x'y'}}
\def\dfn#1{\emph{\textcolor{forestgreen}{\textbf{#1}}}}
\date{}
\begin{document}

\title{A Constant-Factor Approximation Algorithm for\\ Vertex Guarding a WV-Polygon}

\author{Stav Ashur}
\author{Omrit Filtser}
\author{Matthew J. Katz}
\affil{Ben-Gurion University of the Negev\\ Email: \texttt{\{stavshe,omritna,matya\}@cs.bgu.ac.il}}

\maketitle

\begin{abstract}
The problem of vertex guarding a simple polygon was first studied by Subir K. Ghosh (1987), who presented a polynomial-time $O(\log n)$-approximation algorithm for placing as few guards as possible at vertices of a simple $n$-gon $P$, such that every point in $P$ is visible to at least one of the guards. 
Ghosh also conjectured that this problem admits a polynomial-time algorithm with constant approximation ratio. Due to the centrality of guarding problems in the field of computational geometry, much effort has been invested throughout the years in trying to resolve this conjecture. Despite some progress (surveyed below), the conjecture remains unresolved to date.
In this paper, we confirm the conjecture for the important case of weakly visible polygons, by presenting a $(2+\eps)$-approximation algorithm for guarding such a polygon using vertex guards. A simple polygon $P$ is \emph{weakly visible} if it has an edge $e$, such that every point in $P$ is visible from some point on $e$. We also present a $(2+\eps)$-approximation algorithm for guarding a weakly visible polygon $P$, where guards may be placed anywhere on $P$'s boundary (except in the interior of the edge $e$). Finally, we present a $3c$-approximation algorithm for vertex guarding a polygon $P$ that is weakly visible from a chord, given a subset $G$ of $P$'s vertices that guards $P$'s boundary whose size is bounded by $c$ times the size of a minimum such subset.

Our algorithms are based on an in-depth analysis of the geometric properties of the regions that remain unguarded after placing guards at the vertices to guard the polygon's boundary. It is plausible that our results will enable Bhattacharya et al. to complete their grand attempt to prove the original conjecture, as their approach is based on partitioning the underlying simple polygon into a hierarchy of weakly visible polygons.
\end{abstract}

\section{Introduction}
The Art Gallery Problem is a classical problem in computational geometry, posed by Victor Klee in 1973: Place a minimum number of points (representing guards) in a given simple polygon $P$ (representing an art gallery), so that every point in $P$ is seen by at least one of the points. We say that a point $p \in P$ \emph{sees} (or \emph{guards}) a point $q \in P$ if the line segment $\seg{pq}$ is contained in $P$. We say that a subset $G \subseteq P$ \emph{guards} $P$, if every point in $P$ is seen by at least one point in $G$.

There are numerous variants of the art gallery problem, which are also referred to as the art gallery problem. These variants differ from one another in (i) the underlying domain, e.g., simple polygon, polygon with holes, orthogonal polygon, or terrain, (ii) which parts of the domain must be guarded, e.g., only its vertices, only its boundary, or the entire domain, (iii) the type of guards, e.g., static, mobile, or with various restrictions on their coverage area such as limited range, (iv) the restrictions on the location of the guards, e.g., only at vertices (vertex-guards), only on the boundary (boundary-guards), or anywhere (point-guards), and (v) the underlying notion of visibility, e.g., line of sight, rectangle visibility, or staircase visibility. It is impossible to survey here the vast literature on the art gallery problem, ranging from combinatorial and optimization results to hardness of computation results, so we only mention the book by O'rourke~\cite{O'rourke87} and a small sample of recent papers~\cite{AAM18,BM17}. 

In this paper, we deal with the version of the art gallery problem, where the guards are confined to the boundary of the underlying polygon, and in particular to its vertices. Such guards are referred to as \emph{boundary} guards or \emph{vertex} guards, respectively. The first to present results for this version was Ghosh~\cite{Ghosh87,Ghosh10}, who gave a polynomial-time $O(\log n)$-approximation algorithm for guarding either a simple polygon or a polygon with holes using vertex (or edge) guards. In the related work paragraph below we survey many of the subsequent results for this version. 

In this paper, we consider a special family of simple polygons, namely, the family of \emph{weakly visible} polygons. A simple polygon $P$ is \emph{weakly visible} if it has an edge $e$, such that every point in $P$ is visible from some point on $e$, or, in other words, a guard patrolling along $e$ can see the entire polygon. We also consider polygons that are weakly visible from a chord, rather than an edge, where a \emph{chord} in a polygon $P$ is a line segment whose endpoints are on the boundary of $P$ and whose interior is contained in the interior of $P$.

The problem of guarding a weakly visible polygon (WV-polygon) $P$ by vertex guards was studied by Bhattacharya et al.~\cite{BGR17}. They first present a 4-approximation algorithm for vertex guarding \emph{only} the vertices of $P$. Next, they claim that this algorithm places the guards at vertices in such a way that each of the remaining unguarded regions of $P$ has a boundary edge which is contained in an edge of $P$. Based on this claim, they devise a 6-approximation algorithm for vertex guarding $P$'s boundary, and present it as a 6-approximation algorithm for guarding $P$ (boundary plus interior). Unfortunately, this claim is false; counterexamples were constructed and approved by the authors (who are now attempting to fix their algorithm, so as to obtain an algorithm for vertex guarding $P$ entirely)~\cite{BGPR18}. Thus, the challenge of obtaining a constant-factor approximation algorithm for guarding a WV-polygon with vertex guards or boundary guards is still on. 

The main result of this paper is such an algorithm. Specifically, 
denote by \opt\ the size of a minimum-cardinality subset of the vertices of $P$ that guards $P$. 
We present a polynomial-time algorithm that finds a subset $I$ of the vertices of $P$, such that $I$ guards $P$ and $|I| \le (2+\eps)\opt$, for any constant $\eps > 0$.

Already in 1987, Ghosh conjectured that there exists a constant-factor approximation algorithm for vertex guarding a simple polygon. Recently, Bhattacharya et al.~\cite{bhattacharya2017constant} managed to devise such an algorithm for vertex guarding the \emph{vertices} of a simple polygon $P$, by first partitioning $P$ into a hierarchy of weakly visible polygons according to the link distance from a starting vertex. They also present such an algorithm for vertex guarding $P$ (boundary plus interior), however, this algorithm is erroneous, since it relies on the false statement mentioned above. Thus, our result is a significant step towards resolving Ghosh's conjecture, and may also provide the missing ingredient for Bhattacharya et al. to fully resolve the conjecture.  

Prior to our result, the only (non-trivial) family of polygons for which a constant-factor approximation algorithm for guarding a member of the family was known, is the family of monotone algorithms. Specifically, Krohn and Nilsson~\cite{KN13} showed that vertex guarding a monotone polygon is NP-hard, and presented a constant-factor approximation algorithm for \emph{point} guarding such a polygon (as well as an $O(\opt^2)$-approximation algorithm for point guarding a rectilinear polygon). 

\paragraph{Related work.}

Several improvements to Ghosh's $O(\log n)$-approximation algorithm were obtained over the years.
In 2006, Efrat and Har-Peled~\cite{EH06} described a randomized polynomial-time $O(\log \opt)$-approximation algorithm for vertex guarding a simple polygon, where the approximation factor is correct with high probability. 
Moreover, they considered the version in which the guards are restricted to the points of an arbitrarily dense grid, and presented a randomized algorithm which returns an $O(\log \opt)$-approximation with high probability, where $\opt$ is the size of an optimal solution to the modified problem and the running time depends on the ratio between the diameter of the polygon and the grid size. In 2007, Deshpande et al.~\cite{DKDS07} presented a deterministic $O(\log \opt)$-approximation algorithm for (arbitrary) point guards, with running time polynomial in $n$ and the spread of the vertices\footnote{The spread of a set of points is the ratio between the longest and shortest pairwise distances.}. Combining ideas from the latter two algorithms, Bonnet and Miltzow~\cite{BM17} presented a randomized polynomial-time $O(\log \opt)$-approximation algorithm for point guards, assuming integer coordinates and a general position assumption on the vertices of the underlying simple polygon. 

In 2011, King and Kirkpatrick~\cite{KK11} observed that by applying methods that were developed for the Set Cover problem after the publication of Ghosh's algorithm, one can obtain an $O(\log \opt)$ approximation factor for vertex guarding a simple polygon (and an $O(\log h \log \opt)$ factor for vertex guarding a polygon with $h$ holes). Moreover, they improved the approximation factor to $O(\log \log \opt)$ for guarding a simple polygon either with vertex guards or boundary guards, where in the former case the running time is polynomial in $n$ and in the latter case it is polynomial in $n$ and the spread of the vertices.

Most of the variants of the Art Gallery Problem are NP-hard. O'Rourke and Supowit~\cite{ORourkeS83} proved this for polygons with holes and  point guards, Lee and Lin~\cite{LL86} proved this for simple polygons and vertex guards, and Aggarwal~\cite{Aggarwal84} generalized the latter proof to simple polygons and point guards. Eidenbenz et al.~\cite{EidenbenzSW01} presented a collection of hardness results. In particular, they proved that it is unlikely that a PTAS exists for vertex guarding or point guarding a simple polygon. Recently, Abrahamsen et al.~\cite{AAM18} proved that the Art Gallery Problem is $\exists\mathbb{R}$-complete, for simple polygons and point guards.

Weakly Visible polygons were defined and studied in the context of mobile guards~\cite{O'rourke87}.
An \emph{edge guard} is a guard that moves along an edge of the polygon. Thus, a simple polygon is weakly visible if and only if it can be guarded by a single edge guard, and the problem of vertex guarding a WV-polygon is equivalent to the problem of replacing the single edge guard by as few (static) vertex guards as possible. Avis and Toussaint~\cite{AT81} presented a linear-time algorithm for detecting whether a polygon is weakly visible from a given edge. Subsequently, Sack and Suri~\cite{SS88} and Das et al.~\cite{DHN94}
devised linear-time algorithms which output all the edges (if any) from which the polygon is weakly visible. Algorithms for finding either an edge or a chord from which the polygon is weakly visible were given by Ke~\cite{Ke87} and by Ghosh et al.~\cite{GMPSM93}.
Finally, Bhattacharya et al~\cite{BGR17} proved that the problem of point guarding a WV-polygon is NP-hard, and that
there does not exist a polynomial-time algorithm for vertex guarding a WV-polygon with holes with approximation factor better than $((1-\eps)/12)\ln n$, unless P $=$ NP. 

Our algorithm for vertex guarding a WV-polygon uses a solution to the problem of guarding the boundary of a WV-polygon using vertex guards. This problem admits a local-search-based PTAS (see~\cite{Katz18}), which is similar to the local-search-based PTAS of Gibson et al.~\cite{Gibson14} for vertex guarding the vertices of a 1.5D-terrain. The proof of both these PTASs is based on the proof scheme of Mustafa and Ray~\cite{MR09}.

\paragraph{Results.}
Our algorithm for vertex guarding a WV-polygon $P$ (presented in \Cref{sec:algorithm}) consists of two main parts. In the first part, it computes a subset $G$ of the vertices of $P$ that guards $P$'s boundary. This is done by applying a known algorithm for this task. In the second part, it computes a subset $G'$ of the vertices of $P$ of size at most that of $G$, such that $G \cup G'$ guards $P$. Thus, if we apply the algorithm of \cite{Katz18} for computing $G$, then the approximation ratio of our algorithm is $2+\eps$, since the algorithm of Ashur et al. guarantees that $|G| \le (1 + \eps/2)\opt_\partial$, where $\opt_\partial$ is the size of a minimum-cardinality subset of the vertices of $P$ that guards $P$'s boundary, and clearly $\opt_\partial \le \opt$.       

Let $x$ be a vertex in $G$ and let $\vis(x)$ be the visibility polygon of $x$ (i.e., $\vis(x)$ is the set of all points of $P$ that are visible from $x$), then $P \setminus Vis(x)$ is a set of connected regions, which we refer to as \emph{pockets}. Moreover, 
a connected subset $H$ of $P$ is a \emph{hole} in $P$ w.r.t. $G$ if (i) there is no point in $H$ that is visible from $G$, and (ii) $H$ is maximal in the sense that any connected subset of $P$ that strictly contains $H$ has a point that is visible from $G$. The second part of our algorithm (and its proof) are based on a deep structural analysis and characterization of the pockets and holes in $P$ (presented in \Cref{sec:structure}).

The requirement that $G$ is a subset of the vertices of $P$ is actually not necessary; the second part of our algorithm only needs a set of boundary points that guards $P$'s boundary. This observation enables us to use a smaller number of guards, assuming that boundary guards are allowed.

Finally, in \Cref{sec:chord}, we consider the more general family of polygons, those that are weakly visible from a chord. Notice that a chord $\uv$ in $P$ slices $P$ into two (sub)polygons, such that each of them is weakly visible w.r.t. the edge $\uv$. After updating two of the geometric claims presented in \Cref{sec:structure}, we show how to apply our algorithm to a polygon that is weakly visible from a chord. The approximation ratio in this case is $3|G|$ (rather than $2|G|$). However, the best known algorithm for computing $G$ in this case is that of Bachattarya et al.~\cite{bhattacharya2017constant}, which computes a $c$-approximation, for some constant $c$. Therefore, the approximation ratio of our algorithm in this case is $3c$.

\section{Structural analysis}
\label{sec:structure}

For two points $x,y\in \reals^2$, we denote by $\xy$ the line segment whose endpoints are $x$ and $y$, and by $\ell_{xy}$ the line through $x$ and $y$. We denote by $\ell_x$ the horizontal line through $x$.

Let $P$ be a polygon whose set of vertices is $V=\{u=v_1,v_2,\dots,v_n=v\}$, and which is weakly visible from its edge $e=\uv$. We denote the boundary of $P$ by $\partial P$.
The edges of $P$ are the segments $\seg{v_1v_2},\seg{v_2v_3},\dots,\seg{v_{n-1}v_n}$ and $\seg{v_1v_n}=\uv$.
We assume w.l.o.g. that $\uv$ is contained in the x-axis, and $u$ is to the left of $v$. 
Furthermore, we assume that $P$ is contained in the (closed) halfplane above the x-axis; in particular, the angles at $u$ and $v$ are convex. This assumption can be easily removed, as we show towards the end of \Cref{sec:algorithm}.

\subsection{Visibility polygons}
For a point $p\in P$, let $\vis(p)=\{q \, \mid \, \seg{pq}\subseteq P\}$ be the \dfn{visibility polygon} of $p$. In other words, $\vis(p)$ is the set of all points of $P$ that are visible from $p$. Notice that $\vis(p)$ is a star-shaped polygon, and thus clearly also a simple polygon, contained in $P$ (see \Cref{fig:visiblityPolygon}).

\begin{figure}[h]
	\centering
	\includegraphics[scale=1]{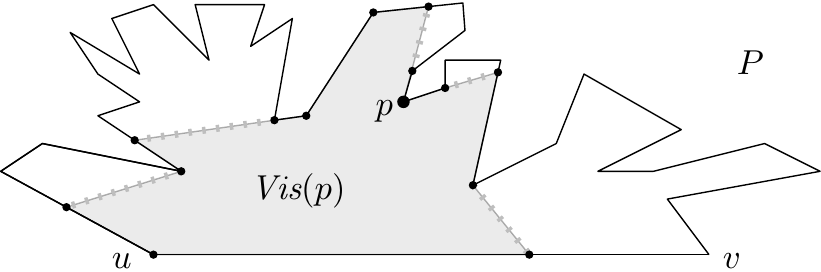}
	\caption{The visibility polygon of a point $p\in P$.}
	\label{fig:visiblityPolygon}
\end{figure}

Any vertex of $P$ that belongs to $\vis(p)$ is also considered a vertex of $\vis(p)$. Consider the set of edges of $\vis(p)$. Some of these edges are fully contained in $\partial P$. The edges that are not contained in $\partial P$ are \dfn{constructed edges}: these are edges whose endpoints are on $\partial P$ and whose interior is contained in the interior of $P$ (these are the gray dotted edges in \Cref{fig:visiblityPolygon}).

\begin{claim}\label{clm:singleSegmentOfUV}
	For any $p\in P$, there exists a single edge of $\vis(p)$ that is contained in $\uv$.
\end{claim}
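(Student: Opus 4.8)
My plan is to prove the equivalent statement that the set of points of $\uv$ that are visible from $p$ forms a single connected sub-segment of $\uv$. Since $P$ lies in the closed halfplane above the $x$-axis and meets the $x$-axis only along $\uv$, we have $\vis(p)\subseteq P$, so $\vis(p)\cap\uv$ is exactly the set of points of $\uv$ seen by $p$; moreover each such point lies on $\partial\vis(p)$, since it lies on $\partial P$ and hence cannot be interior to $\vis(p)$. Consequently each maximal connected piece of this set is a distinct edge of $\vis(p)$ contained in $\uv$, and it suffices to show there is exactly one such piece. That the set is nonempty is immediate from weak visibility: $p$ is seen from some $w\in\uv$, and then $w\in\vis(p)\cap\uv$.

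The heart of the argument is the following convexity-type statement: if $q_1,q_2\in\uv$ are both visible from $p$, then every point $q$ on the sub-segment $\seg{q_1q_2}$ is visible from $p$ as well; this at once yields connectivity. To prove it I would consider the triangle $T=\triangle p q_1 q_2$. For any $q\in\seg{q_1q_2}$ the cevian $\seg{pq}$ is contained in $T$, so once we know $T\subseteq P$ we obtain $\seg{pq}\subseteq T\subseteq P$, i.e.\ $q$ is visible from $p$. Thus the whole claim reduces to establishing $T\subseteq P$.

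To show $T\subseteq P$, observe first that the three sides of $T$ already lie in $P$: the two lateral sides $\seg{pq_1}$ and $\seg{pq_2}$ because $q_1,q_2$ are visible from $p$, and the base $\seg{q_1q_2}$ because it lies in $\uv\subseteq\partial P$. Suppose, for contradiction, that some interior point of $T$ lies outside $P$. Then $\partial P$ must penetrate the interior of $T$, and since $\partial P$ is a connected simple closed curve not entirely contained in the closed triangle $T$, any arc of $\partial P$ entering $T$ must cross $\partial T=\seg{pq_1}\cup\seg{pq_2}\cup\seg{q_1q_2}$. It cannot cross the relative interior of a lateral side transversally: a transversal crossing at an interior point $z$ of such a side would force points of the side near $z$ into the open exterior of $P$, contradicting that the side is contained in $P$. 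It cannot cross the base either: the base lies on the $x$-axis, $P$ lies above the $x$-axis and meets it only along $\uv$, and $\uv$ is already a single edge of the simple polygon, so a second boundary arc touching the relative interior of $\seg{q_1q_2}$ would violate simplicity. This leaves only degenerate contacts—$\partial P$ grazing a side tangentially or passing through one of the vertices $p,q_1,q_2$.

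I expect these degenerate configurations to be the main obstacle, the representative hard case being that in which $p$ is a reflex vertex whose incident edges dip into $T$. The way I would rule it out is to argue that an exterior ``pocket'' of $P$ lying inside $T$ would have to connect to the unbounded exterior of $P$; but such a pocket is trapped between the two lateral sides (which lie in $P$) and the base (above which $P$ lies and below which lies only the exterior of the halfplane). Using that $P$ is simply connected (it has no holes), no such pocket can reach the outer exterior without puncturing the base, which would contradict $\seg{q_1q_2}\subseteq\uv$. Once $T\subseteq P$ is established, the visible portion of $\uv$ is connected, and therefore $\vis(p)$ has exactly one edge contained in $\uv$.
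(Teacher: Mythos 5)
Your proposal is correct and follows essentially the same route as the paper's proof: both reduce the claim to showing that the triangle spanned by $p$ and two visible points of $\uv$ contains no points of $\partial P$ in its interior, because $\partial P$ cannot cross the two visibility segments or the base lying on $\uv$. Your write-up is merely more explicit about the connectivity framing and the degenerate (tangential/vertex) contacts, which the paper's terser argument leaves implicit.
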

\begin{proof}
	Since $p$ is visible from $\uv$, $\uv\cap\vis(p)\neq\emptyset$. Let $a$ (resp. $b$) be the leftmost (resp. rightmost) point on $\uv$ that belongs to $\vis(p)$.
	Assume by contradiction that there exists a point $c$ on $\seg{ab}$ that is not visible from $p$. Then, there exists a point $d$ of $\partial P$ in the interior of the segment $\seg{pc}$ (see \Cref{fig:singleSegmentOfuv}). The triangle $\triangle pab$ does not contain any points from $\partial P$ in its interior, because $\partial P$ cannot cross the segments $\seg{pa}$ and $\seg{bp}$. But $\seg{pc}\subseteq\triangle pab$, and thus $d$ is in the interior of the triangle --- a contradiction.
\end{proof}

\begin{figure}[h]
	\centering
	\includegraphics[scale=1]{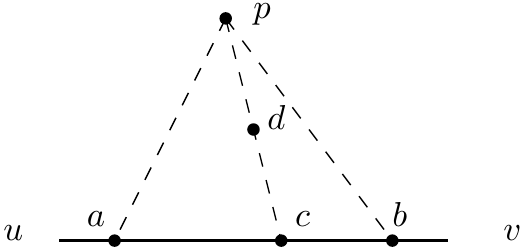}
	\caption{If $p$ sees both $a$ and $b$, then $p$ sees the entire segment $\seg{ab}$.}
	\label{fig:singleSegmentOfuv}
\end{figure}

\subsection{Pockets}
Consider $P\setminus Vis(p)$ (see \Cref{fig:visiblityPolygon}). This is a set of connected regions, which we refer to as \dfn{pockets}. Since $P$ is a simple polygon, each pocket $C$ is adjacent to a single constructed edge, and therefore the intersection of $\partial P$ and $C$ is connected. We refer to $\partial P\cap C$ as the boundary of the pocket $C$, and denote it by $\partial C$. (Thus the constructed edge itself is not part of $\partial C$.)

\begin{figure}[h]
	\centering
	\includegraphics[scale=1]{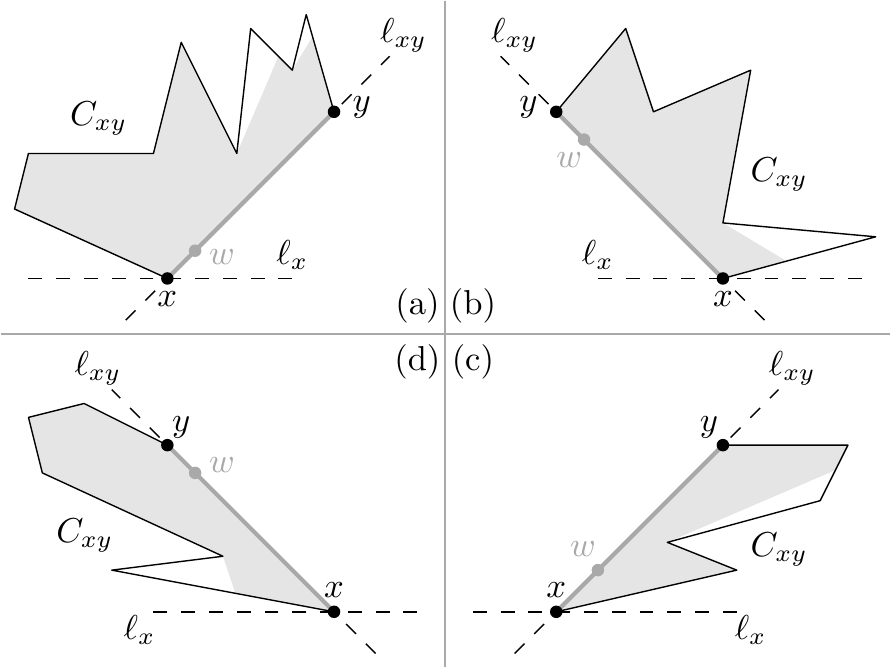}
	\caption{\small $\xy$ is a constructed edge, and the gray area is the set of all points of $\vis(w)$ that lie below/above $\ell_{xy}$. In (a) and (b) the pocket $C_{xy}$ lies above $\ell_{xy}$, and thus $\xy$ is a lower edge. In (c) and (d) the pocket $C_{xy}$ lies below $\ell_{xy}$, and $\xy$ is an upper edge. In (a) and (c) the slope of $\xy$ is positive, and in (b) and (d) it is negative.}
	\label{fig:onePocketLemma}
\end{figure}

Let $\xy$ be a constructed edge such that $x$ is below $y$ (w.r.t. the $y$-axis), and denote by $C_{xy}$ the pocket adjacent to $\xy$.
We say that $C_{xy}$ lies above (resp. below) $\ell_{xy}$, if for any point $w$ in the interior of $\xy$, 
all the points that $w$ sees (points of $\vis(w)$) that lie above (resp. below) $\ell_{xy}$, belong to $C_{xy}$ (see \Cref{fig:onePocketLemma}). Notice that since $\xy$ divides $P$ into two parts, $C_{xy}$ and $P\setminus C_{xy}$, `stepping off' $\xy$ to one side, places us in the interior of $C_{xy}$ (and `stepping off' $\xy$ to the other side places us in the interior of $P\setminus C_{xy}$).

\begin{figure}[h]
	\centering
	\includegraphics[scale=1]{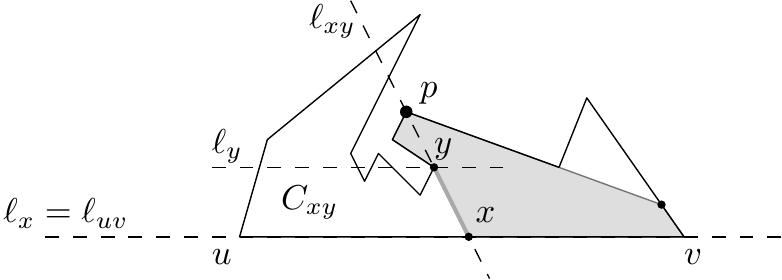}
	\caption{\small The gray area is $\vis(p)$. $\xy$ is an upper edge (i.e., $C_{xy}$ lies below $\ell_{xy}$) and $x$ is on $\uv$, but some points of $C_{xy}$ lie above $\ell_{xy}$.}
	\label{fig:onePocketLemmaUV}
\end{figure}

Note that if $C_{xy}$ lies above (resp. below) $\ell_{xy}$, it does not necessarily mean that all the points of $C_{xy}$ lie above (resp. below) $\ell_{xy}$. Indeed, when $x\in\uv$, $C_{xy}$ may have points on both sides of $\ell_{xy}$ (see \Cref{fig:onePocketLemmaUV}).

We say that $\xy$ is an \dfn{upper edge} (resp. \dfn{lower edge}), if $C_{xy}$ lies below (resp. above) $\ell_{xy}$.
Notice that by \Cref{clm:singleSegmentOfUV}, $\vis(p)$ has at most two constructed edges with an endpoint in $\uv$. Moreover, at most one of these edges is an upper edge.

\begin{figure}[h]
	\centering
	\includegraphics[scale=1]{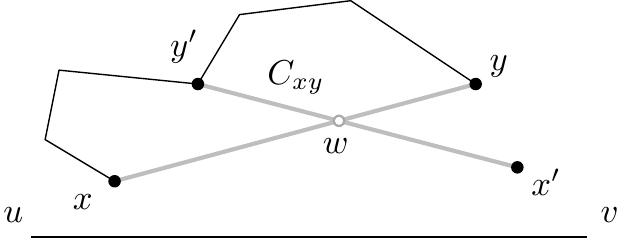}
	\caption{Two constructed edges that cross each other.}
	\label{fig:twoPocketsLemma}
\end{figure}

\begin{claim}\label{clm:two_pockets}
	Let $\xy$ and $\xyp$ be two constructed edges (which belong to two different visibility polygons), such that $\xy$ crosses $\xyp$, and $y'$ is on the same side of $\ell_{xy}$ as $C_{xy}$. Then, $y'\in \partial C_{xy}$ (see \Cref{fig:twoPocketsLemma}).
\end{claim}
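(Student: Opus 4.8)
The plan is to follow the sub-segment of $\xyp$ from the crossing point out to $y'$ and show it is trapped inside the pocket $C_{xy}$. First I would let $z$ be the point at which $\xy$ crosses $\xyp$; since the two segments cross transversally, $z$ lies in the relative interior of both. I would then focus on the sub-segment $\seg{zy'}\subseteq\xyp$, and in particular on its relative interior, the open segment $(z,y')$. Because $\xyp$ is a constructed edge, its relative interior lies in the interior of $P$, and since $z$ is an interior point of $\xyp$ we get $(z,y')\subseteq\operatorname{int}(P)$; in particular $(z,y')$ never meets $\partial P$.

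Next I would exploit the side hypothesis to keep this segment off the chord. The endpoint $z$ lies on $\ell_{xy}$, while $y'$ lies strictly on the side of $\ell_{xy}$ on which $C_{xy}$ lies. Hence every point of the open segment $(z,y')$ lies strictly on that same side of $\ell_{xy}$, and in particular $(z,y')$ is disjoint from $\xy\subseteq\ell_{xy}$.

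The core of the argument would then be a connectivity statement. The chord $\xy$ divides $P$ into the two pieces $C_{xy}$ and $P\setminus C_{xy}$, so $\operatorname{int}(P)\setminus\xy$ has exactly two connected components, namely $\operatorname{int}(C_{xy})$ and $\operatorname{int}(P\setminus C_{xy})$. The open segment $(z,y')$ is connected, lies in $\operatorname{int}(P)$, and (by the previous step) avoids $\xy$, so it lies entirely inside one of these two components. To see which, I would examine points of $(z,y')$ near $z$: stepping off $\xy$ at the interior point $z$ toward the side of $\ell_{xy}$ on which $C_{xy}$ lies puts us into $\operatorname{int}(C_{xy})$ — this is exactly what the definition of ``lies above/below $\ell_{xy}$'' together with the ``stepping off $\xy$'' remark give at $w=z$. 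Therefore the whole segment $(z,y')$ lies in $\operatorname{int}(C_{xy})$, and taking the limit toward the endpoint $y'\in\partial P$ yields $y'\in\overline{C_{xy}}\cap\partial P=\partial C_{xy}$, as claimed.

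The hard part, I expect, is resisting the temptation to argue directly that ``$y'$ on the $C_{xy}$-side of $\ell_{xy}$'' forces $y'\in C_{xy}$: this is false in general, since a pocket may have points on both sides of $\ell_{xy}$ when $x\in\uv$ (cf. \Cref{fig:onePocketLemmaUV}). The side condition must be used only locally — to determine which connected component of $\operatorname{int}(P)\setminus\xy$ the segment $(z,y')$ enters at $z$ — while the global confinement of $(z,y')$ to $C_{xy}$ must come from the Jordan-curve structure of the chord (the fact that $(z,y')$ stays in $\operatorname{int}(P)$ and never re-crosses $\xy$), not from the line $\ell_{xy}$ itself.
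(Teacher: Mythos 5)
Your proof is correct and follows essentially the same route as the paper's: both identify the crossing point as an interior point of $\xy$ from which $y'$ is reached along $\xyp$, and use the side condition to place $y'$ in $C_{xy}$ and hence (since $y'\in\partial P$) in $\partial C_{xy}$. The only difference is that the paper simply invokes its definition of ``$C_{xy}$ lies above/below $\ell_{xy}$'' --- which already builds in the visibility-from-$w$ requirement, so the naive side-implies-membership argument you rightly caution against is not the one the paper actually makes --- whereas you unwind that definition into an explicit connectivity argument along $\seg{zy'}$, a more self-contained but equivalent justification.
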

\begin{proof}
	Let $w$ be the crossing point of $\xy$ and $\xyp$. Then, $w$ is a point in the interior of $\xy$ that sees $y'$, and $y'$ is on the same side of $\ell_{xy}$ as $C_{xy}$, so $y'\in C_{xy}$. Since $\xyp$ is a constructed edge, $y'\in\partial P$ and thus  $y'\in \partial C_{xy}$.
\end{proof}

\begin{claim}\label{clm:pocket}
	Let $\xy$ be a constructed edge of $\vis(p)$ such that $x$ is below $y$, then $C_{xy}$ lies above $\ell_x$ (in the weak sense), i.e., every point in $C_{xy}$ is either above or on $\ell_x$. Moreover, if $x\notin \uv$, then $C_{xy}$ lies strictly above $\ell_x$, i.e., every point in $C_{xy}$ is above $\ell_x$.
\end{claim}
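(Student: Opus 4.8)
The plan is to exploit the weak-visibility hypothesis together with the standing assumption that $P$ lies in the closed halfplane above the $x$-axis, so that the minimum height over all of $P$ is $0$ and is attained exactly on $\uv$. In these coordinates $\ell_x$ is horizontal, $x$ has height $h_x\ge 0$, and $h_x=0$ precisely when $x\in\uv$. I would fix an arbitrary point $q\in C_{xy}$; since $P$ is weakly visible from $e=\uv$, there is a witness $s\in\uv$ with $\seg{sq}\subseteq P$, and $s$ sits at height $0$, the global minimum height in $P$.

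First I would dispose of the easy case $x\in\uv$. Here $h_x=0$, and since every point of $P$ is above or on the $x$-axis, $q$ is automatically above or on $\ell_x$. This already yields the weak part of the statement whenever $x\in\uv$, so the remaining work lies entirely in the case $x\notin\uv$, where the strict inequality must be proved.

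So assume $x\notin\uv$, hence $h_x>0$; then $y$ (which is higher than $x$) also has positive height, so the whole segment $\xy$ lies at heights $\ge h_x>0$ and is disjoint from $\uv$. The heart of the argument is to show that the witness $s$ lies \emph{outside} the pocket, so that $\seg{sq}$ is forced to cross the window $\xy$. Recall that each pocket is adjacent to a single constructed edge, so the only portion of $\partial C_{xy}$ interior to $P$ is $\xy$ itself; consequently a segment inside $P$ joining a point of $P\setminus C_{xy}$ to a point of $C_{xy}$ must cross $\xy$. To place $s$ outside $C_{xy}$ I would invoke \Cref{clm:singleSegmentOfUV}: the visible part of $\uv$ from $p$ is a single subsegment, so the unseen portions of $\uv$ lie in pockets whose constructed edge has an endpoint on $\uv$ --- necessarily the \emph{lower} endpoint, since points of $\uv$ attain the global minimum height $0$. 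Thus a pocket whose lower endpoint $x$ is not on $\uv$ cannot meet $\uv$ at all, giving $s\notin C_{xy}$.

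With $s\in P\setminus C_{xy}$ and $q\in C_{xy}$, the segment $\seg{sq}$ crosses $\xy$ at a point $w$, with $s,w,q$ in this order, and $h_w\ge h_x$ because $x$ is the lowest point of $\xy$. Since $s$ attains height $0<h_x\le h_w$, the affine height function along $\seg{sq}$ is strictly increasing from $s$, and because $q$ lies strictly beyond $w$ (the pocket is the relatively open region $P\setminus\vis(p)$, so $q\notin\xy\subseteq\vis(p)$), we obtain $h_q>h_w\ge h_x$, i.e. $q$ is strictly above $\ell_x$; running the same chain with weak inequalities recovers the weak statement in general. The main obstacle, and the only place genuine geometry enters, is the middle step --- guaranteeing $s\notin C_{xy}$ so that $\seg{sq}$ truly crosses the window $\xy$; once that is secured, monotonicity of height along a segment anchored at the bottom of $P$ finishes the proof.
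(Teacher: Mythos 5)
Your proof is correct and follows essentially the same route as the paper's: dispose of $x\in\uv$ trivially, show the weak-visibility witness on $\uv$ lies outside the pocket so the segment from it to a pocket point must cross the window $\xy$, and conclude by monotonicity of the height function along that segment. The only cosmetic difference is in how you exclude the witness from $C_{xy}$: you route through \Cref{clm:singleSegmentOfUV} (unseen portions of $\uv$ sit in pockets anchored at $\uv$), while the paper argues directly that $\partial C_{xy}\cap\uv\neq\emptyset$ together with $x\notin\uv$ would force $\uv\subseteq\partial C_{xy}$ and hence make $p$ invisible from $\uv$.
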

\begin{proof}
	If $x\in \uv$, then since $P$ is above $\ell_{uv}=\ell_x$, we get that $C_{xy}$ is above $\ell_x$.
		
	Observe that if $\partial C_{xy} \cap \uv \ne \emptyset$, then $x\in \uv$. Otherwise, it follows that $\uv\subseteq \partial C_{xy}$, and since the entire pocket $C_{xy}$ is not visible from $p$, we get that $p$ is not visible from $\uv$, which contradicts the fact that $P$ is a WV-polygon.
		
	If $x\notin \uv$, then $x$ is above $\ell_{uv}$ (see \Cref{fig:onePocketLemma}). By the observation above, $\partial C_{xy} \cap \uv = \emptyset$ (and thus of course $C_{xy} \cap \uv = \emptyset$). Let $z$ be any point in $C_{xy}$. Since $P$ is a WV-polygon, there exists a point $z'$ on $\uv$ such that $\seg{zz'}\subseteq P$. The segment $\seg{zz'}$ has to cross $\xy$, because $z\in C_{xy}$ and $z'\notin C_{xy}$. Since $x$ is below $y$, this crossing point is above $x$, which in turn is above $z'$, so we get that $z$ is above $x$. We conclude that $C_{xy}$ lies strictly above $\ell_x$.
\end{proof}

\subsection{Holes} 
Let $G \subseteq P$ be a set of points that guards $\partial P$. A connected subset $H$ of $P$ is 
a \dfn{hole} in $P$ w.r.t. $G$ if (i) there is no point in $H$ that is visible from $G$, and (ii) $H$ is maximal in the sense that any connected subset of $P$ that strictly contains $H$ has a point that is visible from $G$. 

Let $H$ be a hole in $P$ w.r.t. $G$, then clearly $H$ is a simple polygon. Each edge of $H$ lies on some constructed edge $e$, and we say that $H$ (and this edge of $H$) \dfn{lean} on the edge $e$. Notice that $H$ is fully contained in the pocket adjacent to $e$. Moreover, since $H \cap \partial P = \emptyset$, we can view $H$ as an intersection of halfplanes (defined by the lines containing the constructed edges on which the edges of $H$ lean), and thus obtain the following observation.

\begin{observation}
	Any hole $H$ in $P$ w.r.t. $G$ is a convex polygon.
\end{observation}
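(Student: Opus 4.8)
The plan is to establish the characterization already announced in the surrounding text, namely that $H$ can be written as an intersection of halfplanes, and then invoke the elementary fact that such an intersection is convex. I would organize the argument into the four steps below.

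First, I would confirm that $H$ is a simple polygon. As a maximal connected region of $P$ that is invisible from $G$, the boundary of $H$ is glued together from pieces of the boundaries of the visibility polygons $\vis(g)$, $g\in G$, each of which is a simple polygon bounded by straight segments; hence $\partial P$-free as $H$ is, $\partial H$ is a closed polygonal curve and $H$ is a simple polygon. Second, I would show that each edge of $H$ lies on a constructed edge. The boundary of the guarded region $\bigcup_{g\in G}\vis(g)$ is composed solely of subsegments of $\partial P$ and of constructed edges of the polygons $\vis(g)$. Since $H\cap\partial P=\emptyset$ by definition, no edge of $\partial H$ can lie on $\partial P$, so every edge of $H$ must lie on a constructed edge $e$ of some $\vis(g)$ --- this is exactly the statement that $H$ leans on $e$ --- and, as observed in the text, $H$ is contained in the pocket $C_e$ adjacent to $e$.

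Third, and this is the crux, I would prove that $H$ lies entirely in the closed halfplane bounded by the line $\ell_{e}$ through each such constructed edge $e=\xy$, on the side of the pocket $C_e$. Locally this is immediate: directly across $e$ from $C_e$ lies the region seen by the relevant guard, which is guarded and therefore disjoint from $H$, so $H$ cannot cross the open segment $e$. The real difficulty is to rule out $H$ escaping to the far side of $\ell_e$ by rounding an endpoint of $e$. Here I would use that both endpoints of $e$ lie on $\partial P$ (one of them being a reflex vertex of $P$), while $H$ is connected and avoids $\partial P$ entirely; consequently a path inside $H$ from $C_e$ to the far side of $\ell_e$ would have to cross $\ell_e$ at an interior point of $P$ that is neither on $e$ nor on the guarded portion of $\ell_e$ running from the guard up to the reflex vertex. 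To close this off I expect to invoke the pocket analysis of the previous subsection --- in particular the dichotomy between $x\in\uv$ and $x\notin\uv$, the former being the only case in which $C_{xy}$ may straddle $\ell_{xy}$ --- together with the fact that $H$ never meets $\uv$ (again because $H\cap\partial P=\emptyset$) to eliminate that exceptional case.

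Finally, letting $\ell_1,\dots,\ell_k$ denote the lines supporting the constructed edges on which the edges of $H$ lean and $h_1,\dots,h_k$ the corresponding pocket-side closed halfplanes, Step 3 yields $H\subseteq\bigcap_{i} h_i$; since $\partial H$ consists precisely of segments lying on $\ell_1,\dots,\ell_k$, the region $H$ in fact coincides with $\bigcap_{i} h_i$. An intersection of halfplanes is convex, so $H$ is a convex polygon. I expect Steps 1, 2, and 4 to be routine bookkeeping, with essentially all of the genuine work concentrated in Step 3.
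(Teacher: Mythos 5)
Your proposal follows exactly the route the paper takes: its entire justification for this observation is the preceding sentence noting that $H\cap\partial P=\emptyset$, that every edge of $H$ lies on a constructed edge, and that $H$ can therefore be viewed as an intersection of halfplanes. You are in fact more careful than the paper, since your Step 3 flags the one genuinely subtle point (a pocket $C_{xy}$ with $x\in\uv$ may straddle $\ell_{xy}$, so containment in the pocket does not by itself give containment in the halfplane); if you want to close that step cleanly, note that it suffices to argue locally --- at each vertex of $H$ two constructed edges cross and $H$ must lie in the single wedge avoiding both visibility polygons, so every interior angle of the simple polygon $H$ is less than $\pi$, which already yields convexity and then the global halfplane description.
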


Another immediate but useful observation is the following.
\begin{observation}\label{obs:upperlower}
	Any hole $H$ in $P$ w.r.t. $G$ leans on at least one upper edge and at least one lower edge.
\end{observation}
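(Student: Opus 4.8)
The plan is to prove \Cref{obs:upperlower} by a direction-of-closure argument: a hole $H$ is a bounded convex polygon whose edges all lean on constructed edges, so I must show these leaning edges cannot all be of one type (all upper, or all lower). First I would recall from the preceding discussion that $H$ is a bounded convex polygon, that $H \cap \partial P = \emptyset$, and that each edge of $H$ lies on some constructed edge, which is classified (via the upper/lower terminology introduced before \Cref{clm:two_pockets}) according to whether the adjacent pocket lies below or above the line through that constructed edge. The key geometric fact I would extract is that, for a constructed edge $\xy$, the hole $H$ sits inside the pocket $C_{xy}$, hence on the side of $\ell_{xy}$ where $C_{xy}$ lies. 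So an upper edge (where $C_{xy}$, and therefore $H$, lies \emph{below} $\ell_{xy}$) bounds $H$ from above, while a lower edge bounds $H$ from below.

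The main step is then a bounded-convex-region argument about supporting lines. Since $H$ is bounded, it has a topmost point and a bottommost point, and therefore possesses edges (or at least supporting directions) on its upper envelope and on its lower envelope. I would argue that a hole leaning only on lower edges would be a convex region lying above all of its bounding lines with no upper boundary coming from $\partial P$ (as $H$ meets no polygon edge), forcing $H$ to be unbounded upward — contradicting boundedness. Symmetrically, leaning only on upper edges forces unboundedness downward. Concretely, I would take the highest vertex $h$ of $H$; the edge(s) of $H$ incident to $h$ on its upper chain must lean on some constructed edge that bounds $H$ from above, and by the side-of-line characterization such an edge is precisely an upper edge; dually, the lowest vertex supplies a lower edge. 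This simultaneously exhibits at least one upper edge and at least one lower edge.

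The step I expect to be the main obstacle is handling the boundary interaction with $\uv$, and more generally making the ``bounds from above / bounds from below'' claim fully rigorous rather than intuitive. Because \Cref{clm:pocket} only guarantees that $C_{xy}$ lies above $\ell_x$ in the weak sense when $x \in \uv$ (and the pocket may even straddle $\ell_{xy}$, as \Cref{fig:onePocketLemmaUV} shows), I cannot naively assume every leaning line cleanly separates $H$. The care needed is to use that $H$ avoids $\partial P$ entirely, so $H$ is genuinely carved out only by the lines $\ell_{xy}$ of its leaning constructed edges; thus its boundedness must be enforced purely by these halfplanes. I would make this precise by observing that the outward normals of the halfplanes containing $H$ must positively span the plane (otherwise their intersection is unbounded), and in particular some normal has a positive vertical component (giving an upper edge) and some has a negative vertical component (giving a lower edge). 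Verifying that the vertical component of the outward normal of a leaning edge's halfplane is positive exactly when that edge is an upper edge, and negative exactly when it is a lower edge — carefully including the two edges meeting $\uv$ flagged after \Cref{clm:two_pockets} — is the delicate bookkeeping that carries the proof.
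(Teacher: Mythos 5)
Your argument is correct: the paper states this as an ``immediate'' observation with no proof, and the justification it implicitly relies on is exactly yours --- $H$ is a bounded convex region cut out solely by the lines of its leaning constructed edges (since $H\cap\partial P=\emptyset$), and $H$ lies below $\ell_{xy}$ precisely when $\xy$ is an upper edge, so if all leaning edges were of one type the intersection of halfplanes would be unbounded vertically. The extra care you flag about $\uv$ and straddling pockets is not actually needed here, since the classification of a leaning edge is determined by which side of $\ell_{xy}$ one steps into the pocket (and hence into $H$), but including it does no harm.
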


Next, we show that any hole can be guarded by a single vertex, and that such a vertex can be found in polynomial time. In the following lemma we prove a slightly more general claim. 
\begin{lemma}
	\label{lem:convexHoleGuard}
	Let $S$ be a simple convex polygonal region contained in $P$. Then, 
	there exists a vertex $w$ of $P$, such that $S$ is visible from $w$, i.e., every point in $S$ is visible from $w$. Moreover, $w$ can be found in polynomial time.
\end{lemma}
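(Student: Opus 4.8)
The plan is to work with the set $K=\{w\in P : \seg{ws}\subseteq P \text{ for every } s\in S\}$ of all points of $P$ that see the entire region $S$, and to prove that $K$ contains a vertex of $P$. The starting observation is that $K\neq\emptyset$ for a cheap reason having nothing to do with weak visibility: since $S$ is convex and $S\subseteq P$, for any two points $p,q\in S$ we have $\seg{pq}\subseteq S\subseteq P$, so every point of $S$ already sees all of $S$; in particular $S\subseteq K$. Thus the entire content of the lemma is to \emph{upgrade} an arbitrary point of $K$ to a \emph{vertex} of $P$ lying in $K$.

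The engine of this upgrade is a containment property together with an inheritance property. First, for any $w\in K$, the region $\operatorname{conv}(\{w\}\cup S)$ equals the union $\bigcup_{s\in S}\seg{ws}$ (because $S$ is convex), and each such segment lies in $P$ by definition of $K$; hence $\operatorname{conv}(\{w\}\cup S)\subseteq P$. Second (inheritance), suppose $w\in K$ and the segment $\seg{ws^*}$ to some $s^*\in S$ passes through a reflex vertex $r$ of $P$. Then $r\in K$: for every $s\in S$ the triangle $\triangle ws^*s$ has all three vertices in $\{w\}\cup S$, so $\triangle ws^*s\subseteq\operatorname{conv}(\{w\}\cup S)\subseteq P$; since $r\in\seg{ws^*}$ and $s$ is the opposite vertex of this triangle, $\seg{rs}$ lies inside the triangle and hence in $P$, i.e.\ $r$ sees $s$. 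As $r$ is a vertex of $P$, finding such a \emph{grazing} configuration immediately produces the desired vertex.

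It remains to force a grazing configuration (or to land directly on a vertex). Here I would take $w^*$ to be a lowest point of $K$, i.e.\ one minimizing the $y$-coordinate (breaking ties by leftmost, say); $K$ is compact, so $w^*$ exists. I claim $w^*$ is a vertex of $P$, or a grazing event at $w^*$ exhibits one. If $w^*\in\operatorname{int}(P)$ and no segment $\seg{w^*s}$ with $s\in S$ grazes a reflex vertex, then by compactness of $S$ the whole fan $\operatorname{conv}(\{w^*\}\cup S)$ has positive clearance from $\partial P$, so $w^*$ could be pushed slightly downward while remaining in $K$, contradicting minimality; hence some $\seg{w^*s^*}$ grazes a reflex vertex $r$, and $r\in K$ by inheritance. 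If instead $w^*$ lies in the relative interior of an edge $f$ of $P$, then since moving along $f$ keeps us on $\partial P\subseteq P$ and can leave $K$ only through a grazing event, sliding along $f$ either reveals a grazing reflex vertex (again in $K$) or reaches an endpoint of $f$, which is a vertex of $P$ and lies in $K$ because $K$ is closed. In every case $K$ contains a vertex of $P$, which proves existence.

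I expect this last step to be the main obstacle: making ``could be pushed downward while staying in $K$'' and ``sliding along $f$'' fully rigorous requires a careful (and somewhat tedious) case analysis, together with a general-position argument guaranteeing that the obstruction to visibility is always a reflex \emph{vertex} rather than a degenerate tangency running along an edge. By contrast, the conceptual core, namely the self-visibility of a convex $S$, the hull containment $\operatorname{conv}(\{w\}\cup S)\subseteq P$, and the triangle inheritance, is clean. Finally, the polynomial-time claim follows once existence is known: we may simply test each vertex $w$ of $P$ by computing $\vis(w)$ and checking the polygon containment $S\subseteq\vis(w)$, returning any $w$ that passes. Note that weak visibility was never used, which is consistent with the statement being phrased for an arbitrary convex region $S\subseteq P$.
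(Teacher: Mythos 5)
Your approach is genuinely different from the paper's, and its conceptual core is correct: the observations that $K=\bigcap_{s\in S}\vis(s)$ is nonempty (indeed $S\subseteq K$), that $\operatorname{conv}(\{w\}\cup S)\subseteq P$ for every $w\in K$, and the triangle-based inheritance of membership in $K$ through a grazed reflex vertex are all sound and cleanly argued. The two proofs diverge in how they produce a point of $K$ on $\partial P$ from which to start hunting for a vertex. The paper takes $y$ to be the point of $\partial P$ closest to $S$ and shows by a short projection/distance argument that $y$ already sees all of $S$ (a blocker $y'$ would yield a boundary point strictly closer to $S$), so no perturbation analysis is needed. You instead take the lowest point $w^*$ of $K$ and argue by perturbation, and this is where the genuine gap lies: the justification you give for the interior case --- that absent a grazing reflex vertex the fan $\operatorname{conv}(\{w^*\}\cup S)$ has positive clearance from $\partial P$ --- is false whenever $S$ touches $\partial P$, and this is not a corner case, since the paper applies the lemma to triangles $\triangle xpx'$ whose base $\seg{xx'}$ lies on $\uv\subseteq\partial P$. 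Consequently the push-down and edge-sliding steps must also rule out a segment $\seg{w^*s}$ being blocked ``at the far end'' $s\in S\cap\partial P$, not only grazing at a reflex vertex, and the degenerate collinear tangencies you defer to ``general position'' need actual handling, since the hypotheses grant no such assumption. None of this is fatal --- a compactness argument on the family of segments $\seg{w^*s}$ can be pushed through, and your edge-sliding case is essentially the paper's own closing step (slide $y$ along its edge until $CH(S\cup y)$ meets a vertex of $P$, either an endpoint of the edge or a vertex on a tangent to $S$) --- but as written the pivotal extremal-point step is a sketch whose load-bearing claim is mis-justified. Adopting the paper's closest-point argument to land on $\partial P$, and keeping your inheritance lemma to justify the slide, would close the proof; the polynomial-time claim is fine either way.
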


\begin{proof}
\begin{figure}[H]
		\centering
		\includegraphics[scale=1]{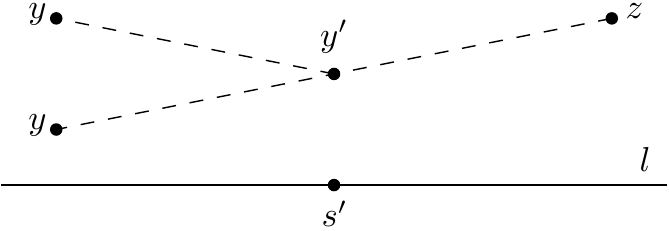}\hspace{2cm}
		\includegraphics[scale=1]{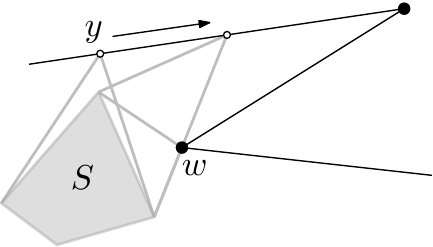}
		\caption{Proof of \Cref{lem:convexHoleGuard}.}
		\label{fig:closestPointToS}
	\end{figure}
If $P$ and $S$ share a vertex, then we are done.
Otherwise, 
let $y$ be a point on $\partial P$ that is closest to $S$, where the distance between a point $p$ and $S$ is $dist(p,S) = \min\{||p-s||\,|\,s \in S\}$, and let $s$ be a point in $S$ closest to $y$.
We first prove that $S$ is visible from $y$. If $y$ is also on $\partial S$, then $y$ clearly sees $S$, so assume that $y$ is not on $\partial S$ and that there exists a point $t \in S$ that is not visible from $y$. Then, there exists a point $y'$ on $\partial P$ that lies in the interior of the segment $\overline{yt}$. Let $s'$ be a point in $S$ closest to $y'$, and let $l$ be the line through $s'$ and perpendicular to $\overline{y's'}$. Assume w.l.o.g. that $l$ is horizontal and that $S$ is contained in the bottom halfplane defined by $l$ (see \Cref{fig:closestPointToS}, left). Then $y'$ is above $l$ and its projection onto $l$ is $s'$. Now, it is impossible that $y$ is not above $y'$ (in terms of $y$-coordinate), since this would imply that $t$ is above $l$. So $y$ must be above $y'$, but then $||y-s|| > ||y'-s'||$ and we have reached a contradiction.

Now, if $y$ is a vertex of $P$, then we are done. Otherwise, we slide $y$ along $\partial P$, in any one of the directions, until $CH(S \cup y)$ meets a vertex $w$ of $P$ (see \Cref{fig:closestPointToS}, right). The vertex $w$ is either an endpoint of the edge of $P$ on which we slide $y$, or it is another vertex of $P$ that lies on one of the tangents to $S$ through $y$. In both cases, $w$ clearly sees $S$. 
\end{proof}

\section{The Algorithm}
\label{sec:algorithm}
We show that given a set $G$ of vertices that guards $\partial P$, one can find a set of vertices $G'$ of size at most $|G|$ such that $G\cup G'$ guards $P$ (boundary plus interior). Let $E$ be the set of the constructed edges of the visibility polygons of the vertices in $G$.

\begin{algorithm}[H]
	\vspace{10pt}
	For each upper edge $e=\xy$ in $E$ with $x\in \uv$:
	\begin{enumerate}
	\item Find the topmost intersection point $p$ of $e$ with an edge $e'=\xyp$ in $E$ with $x'\in \uv$ such that $x'$ is on the same side of $\ell_{xy}$ as the pocket $C_{xy}$.
	\item If such a point $p$ exists (then the triangle $\triangle xpx'$ is contained in $P$), find a vertex that guards $\triangle xpx'$
			(see description in the proof of \Cref{lem:convexHoleGuard}) and add it to $G'$. 
	\end{enumerate}
	\caption{\label{alg:main} Guarding the interior of $P$}
\end{algorithm}

Now our goal is to show that for any hole $H$ in $P$ w.r.t. $G$ there exists a vertex in $G'$ that guards it. More precisely, we show that $H$ is contained in one of the triangles considered by the algorithm.

Let $H$ be a hole with vertices $h_1,\dots,h_k$. Using the following lemma, we first show that $H$ leans on an upper edge with an endpoint in $\uv$. By \Cref{obs:upperlower}, $H$ leans on at least one upper edge.

\begin{figure}[h]
	\centering
	\includegraphics[scale=1]{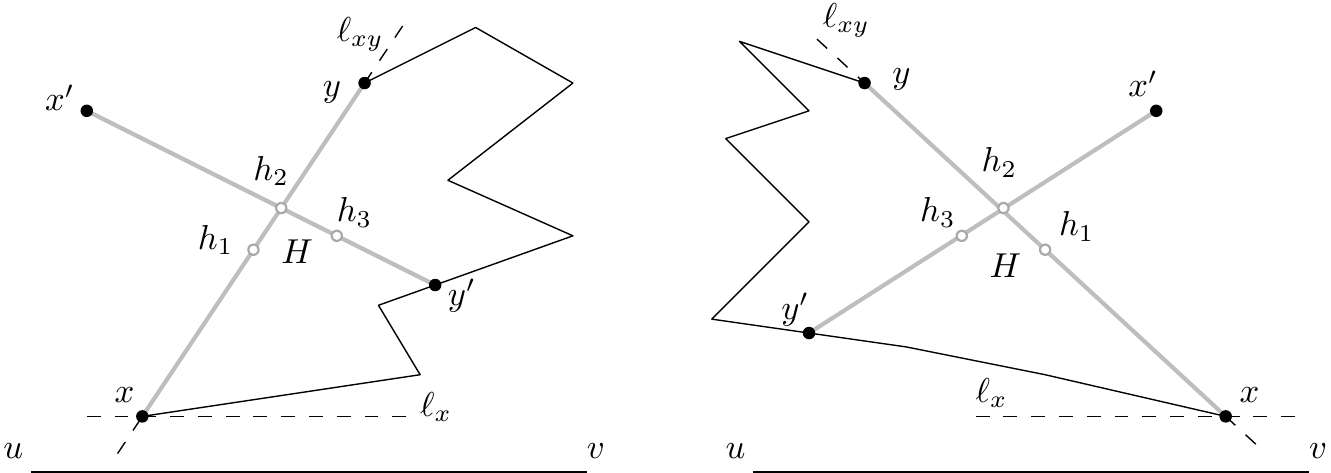}
	\caption{\small Two edges of the hole $H$: $h_1h_2$ and $h_2h_3$. The edge $h_1h_2$ leans on an upper edge and $h_1$ is below $h_2$, thus $h_2h_3$ has to lean on an upper edge such that $h_2$ is below $h_3$ (and not as drawn in the figure).}
	\label{fig:upperEdgeLemma}
\end{figure}

\begin{lemma}\label{lem:upperEdge}
	Assume that $h_1h_2$ leans on an upper edge $\xy$, such that $x$ is below $y$, $x \notin \uv$, and $h_1$ is below $h_2$. Then $h_2h_3$ also leans on an upper edge and $h_2$ is below $h_3$.
\end{lemma}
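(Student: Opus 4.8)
The plan is to follow the two constructed edges (``windows'') that meet at $h_2$. First I would note that the edge $h_1h_2$ lies on the upper edge $\xy$ and the edge $h_2h_3$ lies on some constructed edge $e''=\seg{x''y''}$ (with $x''$ below $y''$), so the shared vertex $h_2$ is exactly the crossing point of $\xy$ and $e''$; in particular the order along $\xy$ is $x,h_1,h_2,y$, so $h_2$ lies in the interior of $\xy$ (and of $e''$). Since $\xy$ is an upper edge, $C_{xy}$ lies below $\ell_{xy}$, and because $H$ is convex with $h_1h_2\subseteq\ell_{xy}$ and $H\subseteq C_{xy}$, the whole hole $H$ lies in the closed halfplane below $\ell_{xy}$. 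By \Cref{clm:pocket} (applicable because $x\notin\uv$), $H$ also lies strictly above $\ell_x$. Thus $H$ is trapped in the wedge with apex $x$ bounded above by $\ell_{xy}$ and below by $\ell_x$, and the target statement becomes: the convex turn of $H$ at $h_2$ keeps $H$ in this wedge in such a way that $e''$ is again an upper edge and $h_3$ lies above $h_2$.

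Next I would analyze the endpoints of $e''$ relative to $\ell_{xy}$. Since $e''$ crosses $\ell_{xy}$ transversally at $h_2$, one endpoint of $e''$ is strictly above $\ell_{xy}$ and the other strictly below, on the side of $C_{xy}$. The vertex $h_3\in H\subseteq C_{xy}$ lies below $\ell_{xy}$, hence $h_3$ sits on the portion of $e''$ joining $h_2$ to the endpoint $q\in\{x'',y''\}$ that lies below $\ell_{xy}$. Applying the reasoning of \Cref{clm:two_pockets} at the crossing point $w=h_2$, the point $q$ is visible from $w\in\operatorname{int}\xy$ and lies on the same side of $\ell_{xy}$ as $C_{xy}$, so $q\in\partial C_{xy}\subseteq\partial P$; by \Cref{clm:pocket} this forces $q$ strictly above $\ell_x$, and since $x\notin\uv$ puts $\ell_x$ strictly above the $x$-axis, we also get $q\notin\uv$.

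The key reduction I would then make is to show $q=y''$, the \emph{upper} endpoint of $e''$. Granting this, $e''$ is an upper edge: its pocket $C_{x''y''}$ contains $H$, and $H$ lies on the side of $\ell_{x''y''}$ on which $h_1$ lies, which is the side below $\ell_{x''y''}$, so $C_{x''y''}$ lies below $\ell_{x''y''}$ as required. Combined with the convexity of $H$ at $h_2$ and the wedge confinement, $q=y''$ pins the edge $h_2h_3$ to the ``rising'' branch of the top of $H$, which is exactly the statement that $h_2$ is below $h_3$. So both conclusions follow at once from the single identification $q=y''$.

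The hard part will be establishing $q=y''$, i.e.\ ruling out the two ``descending'' alternatives in which $h_2h_3$ turns downward (either $e''$ an upper edge with $h_3$ below $h_2$, or $e''$ a lower edge with $h_3$ below $h_2$); these correspond to $q=x''$. The delicate point is that, read as purely local facts about two crossing segments and their pockets, \Cref{clm:singleSegmentOfUV,clm:two_pockets,clm:pocket} do not by themselves forbid a downward turn, so the argument must also invoke the global hypotheses that $h_1,h_2,h_3$ occur consecutively on the convex boundary of $H$ and that $H\subseteq C_{xy}\cap C_{x''y''}$ is a genuine (maximal, everywhere-unseen) hole. Concretely I would assume $q=x''$ for contradiction, use that $x''\in\partial C_{xy}$ lies strictly above the $x$-axis to license the strict form of \Cref{clm:pocket} for $e''$ (so $H$ lies strictly above $\ell_{x''}$), and then trace $\partial H$ from $h_1$ within the wedge to produce a point of $H$ forced out of one of the two pockets, contradicting $H\subseteq C_{xy}\cap C_{x''y''}$. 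Turning this into a clean contradiction, rather than the case-by-case coordinate bookkeeping it threatens to become, is the crux of the proof.
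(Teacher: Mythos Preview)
Your setup matches the paper's exactly: you correctly place $H$ in the wedge below $\ell_{xy}$ and strictly above $\ell_x$ via \Cref{clm:pocket}, you correctly let $q\in\{x'',y''\}$ be the endpoint of $e''$ on the $h_3$-side of $\ell_{xy}$, and you correctly use \Cref{clm:two_pockets} at $h_2$ to put $q\in\partial C_{xy}$ and hence strictly above $\ell_x$ (so $q\notin\uv$). You also correctly observe that both conclusions of the lemma follow immediately once $q=y''$ is established. All of this is the paper's argument.

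The gap is precisely where you flag it: you do not have a proof that $q=y''$, only a plan to ``trace $\partial H$'' that you yourself describe as threatening to degenerate into casework. The paper closes this in two lines by applying \Cref{clm:two_pockets} a second time, with the roles of the two constructed edges swapped. Concretely (in your notation): $h_1$ lies on $\xy$ between $x$ and $h_2$, so $h_1$ and $x$ are on the same side of $\ell_{x''y''}$; since $h_1\in H\subseteq C_{x''y''}$, that side is the $C_{x''y''}$-side, and \Cref{clm:two_pockets} applied at the crossing point $h_2$ gives $x\in\partial C_{x''y''}$. Now assume for contradiction that $q=x''$. You already showed $x''=q\notin\uv$, so the strict form of \Cref{clm:pocket} applies to $e''$ and yields that every point of $C_{x''y''}$, in particular $x$, lies strictly above $\ell_{x''}$. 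But you also showed $x''=q$ lies strictly above $\ell_x$. These two strict inequalities on the heights of $x$ and $x''$ contradict each other, so $q=y''$.

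In short, the missing idea is symmetry: the same pair \Cref{clm:two_pockets}+\Cref{clm:pocket} that gave you ``$q$ above $x$'' also gives ``$x$ above $q$'' once you swap the two windows, and that is the whole contradiction. No boundary tracing or global convexity argument about $\partial H$ is needed.
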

\begin{proof}
	Since $\xy$ is an upper edge, $H$ lies below $\ell_{xy}$, and by \Cref{clm:pocket}, $H$ lies strictly above $\ell_x$ (see \Cref{fig:upperEdgeLemma}).
	
	Let $\xyp$ be the constructed edge on which $h_2h_3$ is leaning, and assume that $y'$ is on the same side of $\ell_{xy}$ as $h_3$. ($h_2$ is the crossing point of $\xy$ and $\xyp$, so $x'$ and $y'$ are on different sides of $\ell_{xy}$.) Since $h_3$ is a vertex of the hole $H$, both $h_3$ and $y'$ are on the same side of $\ell_{xy}$ as $C_{xy}$. Moreover, by \Cref{clm:two_pockets} we get that $y' \in \partial C_{xy}$, and therefore $y'$ is strictly above $x$.
	
	Next we show that $h_2$ is below $h_3$, by showing that $x'$ is below $y'$. Indeed, if $x'$ is above $y'$, then since $h_1$ is a vertex of the hole $H$, both $h_1$ and $x$ are on the same side of $\ell_{x'y'}$ as $C_{x'y'}$. Again by \Cref{clm:two_pockets} we get that $x \in \partial C_{x'y'}$, and therefore $x$ is strictly above $y'$, a contradiction.
	
	Finally, $\xyp$ is clearly an upper edge, since $h_1$ (which is in $C_{x'y'}$) is below $\xyp$.
\end{proof}

\begin{figure}[h]
	\centering
	\includegraphics[page=1,scale=1.2]{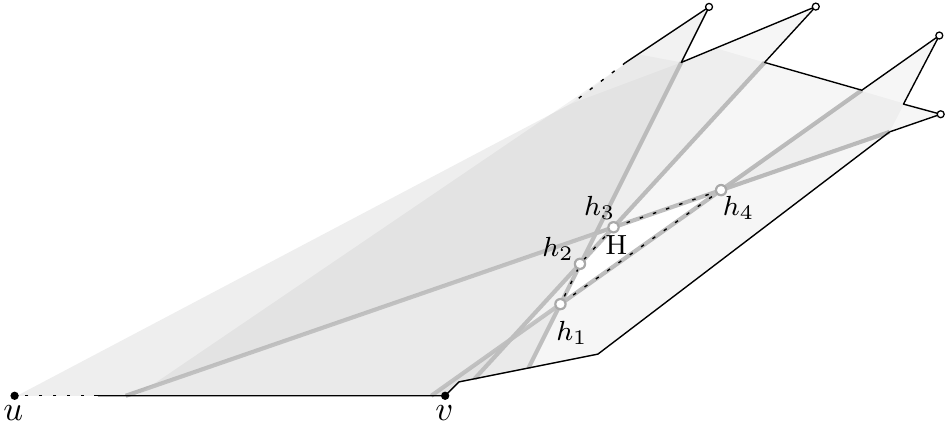}
	\caption{\small The edges of $H$ lean on a sequence of upper edges.}
	\label{fig:upperEdges}
\end{figure}

This implies that every hole $H$ leans on an upper edge with an endpoint in $\uv$: 
$H$ has at least one edge $h_1h_2$ that leans on an upper edge $\seg{x_1y_1}$, and $h_1$ is below $h_2$. If $x_1$ or $y_1$ are on $\uv$ then we are done. Otherwise, by \Cref{lem:upperEdge}, the edge $h_2h_3$ also leans on an upper edge $\seg{x_2y_2}$, and $h_2$ is below $h_3$. Again, if one of $x_2,y_2$ is on $\uv$ then we are done, otherwise by \Cref{lem:upperEdge} $h_3h_4$ leans on an upper edge and $h_3$ is below $h_4$. The process must end before it reaches $h_{k-1}h_k$, since, if $h_{k-1}h_k$ leans on an upper edge $\seg{x_{k-1}y_{k-1}}$ such that none of $x_{k-1},y_{k-1}$ is on $\uv$, then by applying \Cref{lem:upperEdge} once again we get that $h_k$ is below $h_1$, which is impossible (see \Cref{fig:upperEdges}).
But the only way for the process to end is when it reaches an upper edge $\seg{x_ty_t}$ such that $x_t$ or $y_t$ are on $\uv$.

\begin{claim}\label{lem:touchUV}
	Assume that $h_1h_2$ leans on an upper edge $\xy$ such that $x$ is below $y$ and $x\in\uv$, and that $h_1$ is below $h_2$. Then $h_2h_3$ leans on a constructed edge $\xyp$ (where $x'$ is below $y'$) such that $x' \in \uv$, 
\end{claim}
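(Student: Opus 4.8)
The plan is to reuse the opening moves from the proof of \Cref{lem:upperEdge}, but to cash in the hypothesis $x\in\uv$ at the very end in a completely different way. Since $\xy$ is an upper edge, the whole convex hole $H$ lies below $\ell_{xy}$; and since $h_1h_2$ and $h_2h_3$ are consecutive edges of the convex polygon $H$, the vertex $h_2$ is the crossing point of $\xy$ and $\xyp$, so $\ell_{xy}$ and $\ell_{x'y'}$ are distinct lines crossing transversally at $h_2$. The target is to show that the lower endpoint $x'$ lies on $\uv$. Because $x'\in\partial P$ and $P$ lies in the closed halfplane above the $x$-axis, $x'$ is weakly above the $x$-axis, and $\uv$ is the only part of $\partial P$ on the $x$-axis; hence it suffices to rule out $x'$ lying \emph{strictly} above the $x$-axis. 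I would argue by contradiction, assuming $x'\notin\uv$ and invoking the strict clause of \Cref{clm:pocket}.

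The heart of the argument — and the step I expect to be the main obstacle — is to show that $x$ \emph{itself} lies in the pocket $C_{x'y'}$ of $\xyp$; this is where the side-of-the-line bookkeeping must be done carefully. Along $\ell_{xy}$ the four points occur in the order $x,h_1,h_2,y$, since $x$ is the lower endpoint of $\xy$ and $h_1$ is below $h_2$ in its interior. Thus $x$ and $h_1$ lie on the same side of $h_2$ on $\ell_{xy}$, and therefore on the same side of the transversal $\ell_{x'y'}$. Because $h_2h_3$ is an edge of the convex polygon $H$ lying on $\ell_{x'y'}$, the remaining vertex $h_1$ lies strictly on the interior side of $\ell_{x'y'}$, i.e.\ on the same side as $H$; consequently $x$ lies on the same side of $\ell_{x'y'}$ as $H$. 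Since $h_2h_3$ leans on $\xyp$, we have $H\subseteq C_{x'y'}$, so $C_{x'y'}$ lies on that same side, and hence $x$ is on the same side of $\ell_{x'y'}$ as $C_{x'y'}$. Now I would apply \Cref{clm:two_pockets} with the roles of the two constructed edges reversed: $\xyp$ crosses $\xy$ at the interior point $h_2$, and the endpoint $x$ of $\xy$ lies on the $C_{x'y'}$ side, which yields $x\in\partial C_{x'y'}\subseteq C_{x'y'}$. (Here one also uses that $\xy$ and $\xyp$ belong to different visibility polygons, which holds since constructed edges of a single visibility polygon cannot cross.)

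To finish, suppose for contradiction that $x'\notin\uv$. Then the strict part of \Cref{clm:pocket}, applied to the constructed edge $\xyp$ with $x'$ below $y'$, asserts that every point of $C_{x'y'}$ lies strictly above $\ell_{x'}$. But $x\in C_{x'y'}$ sits on the $x$-axis, while $x'\notin\uv$ forces $x'$ to lie strictly above the $x$-axis, so $x$ lies strictly below $\ell_{x'}$ — a contradiction. Hence $x'\in\uv$. The degenerate possibility $x=x'$ makes the conclusion immediate and can be set aside. Note that, unlike \Cref{lem:upperEdge}, this argument deliberately draws no conclusion about whether $\xyp$ is an upper or a lower edge, nor about the relative heights of $h_2$ and $h_3$; the only thing it needs is the placement of $x$ on the $x$-axis, which through \Cref{clm:two_pockets} and the strict clause of \Cref{clm:pocket} forces the lower endpoint of $\xyp$ onto $\uv$ as well.
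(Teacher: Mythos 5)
Your proof is correct and follows essentially the same route as the paper's: both apply \Cref{clm:two_pockets} with the roles of the two constructed edges reversed (using that $x$ and $h_1$ lie on the $C_{x'y'}$ side of $\ell_{x'y'}$) to conclude $x\in\partial C_{x'y'}$, and then invoke \Cref{clm:pocket} together with $x\in\uv$ to force $x'$ onto the $x$-axis. The only cosmetic difference is that you run the last step as a contradiction via the strict clause of \Cref{clm:pocket}, whereas the paper uses the weak clause directly.
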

\begin{proof}	
	Let $\xyp$ be the constructed edge on which $h_2h_3$ is leaning such that $x'$ is below $y'$. The constructed edges $\xy$ and $\xyp$ cross each other at the point $h_2$, so by \Cref{clm:two_pockets} we get that $x \in \partial C_{x'y'}$, because $h_1$ and $x$ are on the same side of $\ell_{x'y'}$. Now, by \Cref{clm:pocket}, $x$ lies (weakly) above $\ell_{x'}$, but $x\in\uv$, so $x' \in \ell_{uv}$.
\end{proof}

We are now ready to prove the correctness of our algorithm.

\begin{claim}
	Any hole $H$ in $P$ w.r.t. $G$ is contained in one of the triangles considered by \Cref{alg:main}. 
\end{claim}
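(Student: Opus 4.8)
The plan is to fix, for the given hole $H$, the one triangle that \Cref{alg:main} builds from the upper edge on which $H$ leans, and to prove that this triangle already contains $H$ --- which is exactly the assertion of the claim (and, via \Cref{lem:convexHoleGuard} applied to that triangle, explains why the vertex the algorithm adds to $G'$ sees $H$). By the paragraph following \Cref{lem:upperEdge}, $H$ has an edge $\seg{h_1h_2}$ leaning on an upper edge $\xy$ with $x\in\uv$, $x$ below $y$, and $h_1$ below $h_2$; since $H\cap\partial P=\emptyset$ while $x\in\uv\subseteq\partial P$, the points occur in the order $x,h_1,h_2$ along $\xy$, so $h_2$ is the highest point at which $H$ meets $\xy$. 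Let $p$ be the topmost intersection that \Cref{alg:main} finds on $\xy$, realized by a \emph{qualifying} edge $\xyp$ (meaning $x'\in\uv$ and $x'$ lies on the side of $\ell_{xy}$ containing $C_{xy}$), and let $\triangle xpx'$ be the triangle it produces. I will show $H\subseteq\triangle xpx'$.

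The core is a single connectivity argument that disposes of all three sides of the triangle at once, avoiding any case split on the slope of $\xyp$. Regard $H$ as (relatively) open. None of the three bounding segments can be met by $H$: the side $\seg{xp}\subseteq\xy$ and the side $\seg{px'}\subseteq\xyp$ lie on constructed edges, so a point of $H$ lying on either would have, in every neighbourhood, points strictly on the visible side of that constructed edge, i.e.\ points seen by the corresponding guard of $G$, contradicting that $H$ is a hole; and the third side $\seg{xx'}\subseteq\uv\subseteq\partial P$ cannot meet $H$ because $H\cap\partial P=\emptyset$. Hence $H$ is connected and disjoint from $\partial(\triangle xpx')$, so it lies in a single component of the plane minus that boundary. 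To identify the component as the interior, I use that $H$ leans on $\xy$ along $\seg{h_1h_2}$: because $p$ is no lower than $h_2$ (justified next), $\seg{h_1h_2}$ is contained in the side $\seg{xp}$, and points of $H$ just off this contact lie on the side of $\ell_{xy}$ containing $C_{xy}$ --- which, as $x'$ lies on that same side, is precisely the interior side of $\seg{xp}$. Therefore $H\subseteq\triangle xpx'$.

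It remains to supply the one geometric input used above, that the topmost qualifying intersection satisfies $p$ no lower than $h_2$. For this I look at the edge of $H$ adjacent to $\seg{h_1h_2}$ at its top endpoint $h_2$: by \Cref{lem:touchUV} the edge $\seg{h_2h_3}$ leans on a constructed edge with an endpoint on $\uv$, and $\xy$ crosses that edge exactly at $h_2$. Provided $H$ turns downward here, i.e.\ $h_3$ is below $h_2$, the portion of that edge running from $h_2$ toward its lower endpoint heads toward $h_3$, a vertex of $H$ and hence of $C_{xy}$, so it stays on the side of $\ell_{xy}$ containing $C_{xy}$; its lower endpoint therefore lies on that side, the edge is qualifying, and $h_2$ is itself one of the intersection points scanned in Step~1 of \Cref{alg:main}. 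As $p$ is the topmost such point, $p$ is no lower than $h_2$, as required (in particular a qualifying intersection exists, so the triangle is indeed constructed).

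I expect this last step --- certifying that the boundary of $H$ turns \emph{downward} at $h_2$, so that the adjacent edge is qualifying and $h_2$ is a genuine candidate crossing --- to be the main obstacle, since it is exactly what keeps the top of $H$'s contact with $\xy$ below the apex $p$ and prevents $H$ from protruding past $p$; the direction of this turn has to be read off from the way the ascending chain of upper edges of \Cref{lem:upperEdge} terminates upon reaching $\uv$. A minor technical point, already invoked above, is the convention of treating $H$ as relatively open, which is what lets ``$H$ meets a constructed edge'' be upgraded to ``$H$ sees a guard in $G$''; the convexity of $H$ then keeps the single-component conclusion of the connectivity argument clean.
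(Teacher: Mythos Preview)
Your connectivity argument for $H\subseteq\triangle xpx'$, once you know that $p$ is no lower than $h_2$, is sound and in fact more explicit than the paper's one-line ``then the triangle corresponding to $p$ contains $H$ and we are done.'' But the step you flag as the main obstacle is a genuine gap, and your suggested route---reading off the downward turn at $h_2$ from the ascending chain of \Cref{lem:upperEdge}---does not work as stated. That chain only produces \emph{some} edge $\seg{h_1h_2}$ leaning on an upper edge with an endpoint on $\uv$; for an arbitrary such choice, $\seg{h_2h_3}$ may well continue upward. Concretely, if $x'$ happens to lie on the non-$C_{xy}$ side of $\ell_{xy}$, then $h_3$ is forced toward $y'$ and hence above $h_2$, so $h_2$ is not a qualifying crossing and you cannot conclude $p\ge h_2$.

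The paper closes this gap with one extra idea: among all edges of $H$ that lean on an upper edge with an endpoint on $\uv$, choose $\seg{h_1h_2}$ to have \emph{minimum absolute slope}. If then $x'$ (the $\uv$-endpoint of the edge on which $\seg{h_2h_3}$ leans) were on the non-$C_{xy}$ side of $\ell_{xy}$, one checks that $\xyp$ is itself an upper edge with an endpoint on $\uv$ and strictly smaller absolute slope than $\xy$, contradicting minimality. Hence $x'$ lies on the $C_{xy}$ side, $\xyp$ is a qualifying edge, $h_2$ is among the candidate crossings in Step~1, and therefore $p$ is no lower than $h_2$---at which point your connectivity argument finishes the proof. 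So your overall architecture is correct; what is missing is precisely this minimum-slope selection of $\seg{h_1h_2}$.
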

\begin{proof}
	By the argument immediately following \Cref{lem:upperEdge}, $H$ has an edge which leans on an upper edge with an endpoint on $\uv$. Let $h_1h_2$, where $h_1$ is below $h_2$, be such an edge of $H$ of minimum (absolute) slope, and let $\xy$, where $x\in\uv$, be the upper edge on which $h_1h_2$ is leaning. By \Cref{lem:touchUV}, the edge $h_2h_3$ leans on a constructed edge $\xyp$ such that $x'\in\uv$. 
		
	Assume first that the algorithm found a point $p$ on $\xy$. If $p$ is not below $h_2$, then the triangle corresponding to $p$ contains $H$ and we are done.
	If $p$ is below $h_1$ (or $p=h_1$), then $x'$ is not on the same side of $\xy$ as $C_{xy}$ (because otherwise, $p$ would be a point not below $h_2$). But, if so, then the (absolute) slope of $h_2h_3$ is smaller than that of $h_1h_2$, a contradiction. 
	Now, if the algorithm did not find such a point $p$ on $\xy$, then as before this means that $x'$ is not on the same side of $\xy$ as $C_{xy}$ and we reach a contradiction.   
	
\end{proof}

\begin{figure}[h]
	\centering
	\includegraphics[scale=1]{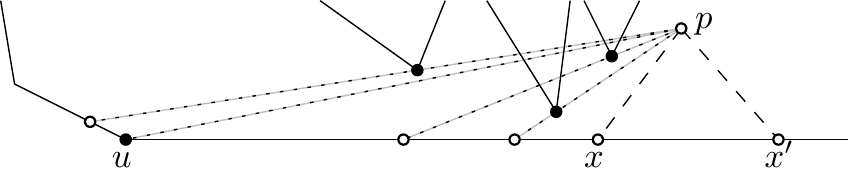}
	\caption{\small Finding a vertex of $P$ that guards the triangle $\triangle xpx'$.}
	\label{fig:linear_time}
\end{figure}

\paragraph{Running time.}
The running time of the algorithm of~\cite{Katz18} for finding a set of vertices $G$ of size $O(1+\eps/2)\opt$ that guards $\partial P$ is $O(n^{O(1/\eps^2)})$, for any $\eps > 0$. The set of constructed edges $E$ can be computed in $O(|G|n)$ time~\cite{JoeS87,Lee83}. Notice that \Cref{alg:main} only uses a subset $E'$ of the constructed edges, namely, those with an endpoint in $\uv$, and by \Cref{clm:singleSegmentOfUV} we have that $|E'| \le 2|G|$. Therefore, the total number of intersection points (Step~1 of \Cref{alg:main}) is $O(|G|^2)$. Moreover, a vertex that guards the triangle $\triangle xpx'$ (Step~2 of \Cref{alg:main}) can be found in linear time by the following simple algorithm. (We could use the algorithm described in the proof of \Cref{lem:convexHoleGuard}, but in this special case it is not necessary.) Assume, w.l.o.g., that $x$ is to the left of $x'$. Then, for each vertex $w$ of $P$ such that $w$ is below $p$, compute the crossing point (if it exists) between $\uv$ and the ray emanating from $p$ and passing through $w$. Now, among the vertices whose corresponding crossing point is between $u$ and $x$, pick the closest one to $x$ (see \Cref{fig:linear_time}). 
Thus, the total running time of \Cref{alg:main} (given the set $G$) is $O(|G|n)=O(n^2)$. Note that for any $\eps < \frac{1}{\sqrt{2}}$, the total running time is dominated by $O(n^{O(1/\eps^2)})$.

\begin{figure}[h]
	\centering
	\includegraphics[scale=1]{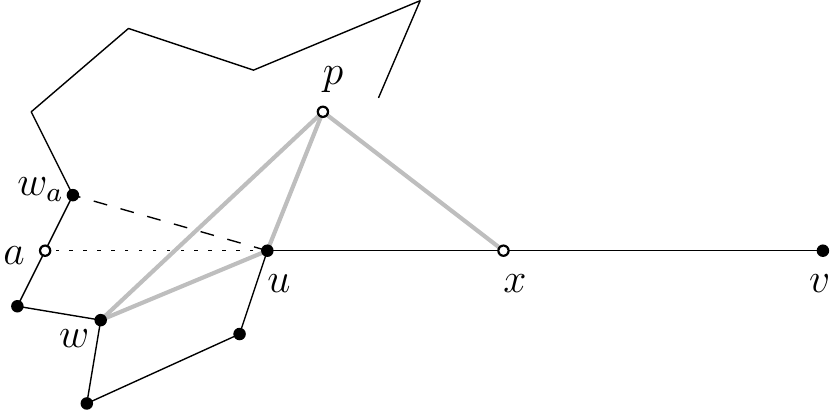}
	\caption{\small A polygon with a concave angle at $u$.}
	\label{fig:convexity}
\end{figure}

\paragraph{Removing the convexity assumption.} Up to now, we have assumed that the angles at $u$ and at $v$ are convex. As in~\cite{Katz18}, this assumption can be easily removed. Assume, e.g., that the angle at $u$ is concave, and let $a$ be the first point on $P$'s boundary (moving clockwise from $u$) that lies on the $x$-axis (see \Cref{fig:convexity}).
Then, every point in the open portion of the boundary between $u$ and $a$ is visible from $u$ and is not visible from any other point on $\uv$. Moreover, for any vertex $w$ in this portion of $P$'s boundary, if $w$ sees some point $p$ in $P$ above the $x$-axis, then so does $u$. Indeed, since $P$ is weakly-visible from $\uv$, there exists a point $x\in\uv$ that sees $p$. In other words, $\seg{xp}$ is contained in $P$, as well as $\seg{uw}$ and $\seg{wp}$. Thus, the quadrilateral $uwpx$ does not contain points of $\partial P$ in its interior, and since $\seg{up}$ is contained in it, we conclude that $u$ sees $p$. 
Therefore, we may assume that an optimal guarding set does not include a vertex from this portion. Moreover, we may assume that the size of an optimal guarding set is greater than some appropriate constant, since otherwise we can find such a set in polynomial time. Now, let $w_a$ be the first vertex following $a$. We place a guard at $u$ and replace the portion of $P$'s boundary between $u$ and $w_a$ by the edge $\seg{uw_a}$. Notice that every point in the removed region is visible from $u$. Similarly, if the angle at $v$ is concave, we define the point $b$ and the vertex $w_b$ (by moving counterclockwise from $v$), place a guard at $v$, and replace the portion of $P$'s boundary between $v$ and $w_b$ by the edge $\seg{vw_b}$. Finally, we apply the algorithm of~\cite{Katz18} to the resulting polygon, after adjusting its parameters so that together with $u$ and $v$ we still get a $(1+\eps/2)$-approximation of an optimal guarding set for $\partial P$.

\begin{theorem}
	Given a WV-polygon $P$ with $n$ vertices and $\eps > 0$, one can find in $O(n^{\max\{2,O(1/\eps^2)\}})$ time a subset $G$ of the vertices of $P$, such that $G$ guards $P$ (boundary plus interior) and $G$ is of size at most $(2+\eps)\opt$, where $\opt$ is the size of a minimum-cardinality such set.
\end{theorem}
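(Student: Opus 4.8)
The plan is to assemble the three ingredients developed above into one algorithm and then verify correctness, the size bound, and the running time in turn. First I would invoke the boundary-guarding algorithm of~\cite{Katz18} to obtain a set $G$ of vertices that guards $\partial P$ with $|G| \le (1+\eps/2)\opt_\partial$, where $\opt_\partial$ is the minimum number of vertices needed to guard $\partial P$. Because any set that guards all of $P$ also guards $\partial P$, we have $\opt_\partial \le \opt$, and hence $|G| \le (1+\eps/2)\opt$. I would then run \Cref{alg:main} on input $G$ to produce a set $G'$, and output $G \cup G'$.

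For correctness, observe that $G$ already sees all of $\partial P$, so the only points of $P$ that could remain unguarded lie inside the holes of $P$ with respect to $G$. By the claim establishing that every hole is contained in one of the triangles $\triangle xpx'$ considered by \Cref{alg:main}, together with Step~2 of the algorithm (which adds to $G'$ a vertex of $P$ guarding that triangle, whose existence and polynomial-time computability follow from \Cref{lem:convexHoleGuard}, or from the simpler linear-time procedure described afterwards), every hole --- and therefore every point of $P$ --- is seen by some vertex of $G \cup G'$.

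For the size bound, the key point is that \Cref{alg:main} adds at most one vertex to $G'$ per upper edge in $E$ having an endpoint on $\uv$. By \Cref{clm:singleSegmentOfUV}, the visibility polygon of each vertex of $G$ contributes at most two constructed edges with an endpoint on $\uv$, of which at most one is an upper edge; hence there are at most $|G|$ such upper edges and $|G'| \le |G|$. Combining the two estimates gives $|G \cup G'| \le 2|G| \le 2(1+\eps/2)\opt = (2+\eps)\opt$.

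For the running time I would sum the three phases: the algorithm of~\cite{Katz18} takes $O(n^{O(1/\eps^2)})$ time, the constructed-edge set $E$ is computed in $O(|G|n)$ time, and (as analysed in the running-time paragraph) \Cref{alg:main} runs in $O(|G|n)=O(n^2)$ time, for a total of $O(n^{\max\{2,O(1/\eps^2)\}})$. The standing convexity assumption at $u$ and $v$ is discharged by the reduction in the convexity-removal paragraph: place guards at $u$ and/or $v$, clip off the boundary portions visible only from them, and rerun the procedure on the clipped polygon with rescaled parameters. I expect the only delicate point of the whole argument to be precisely this last piece of bookkeeping --- verifying that the two extra guards at $u,v$ together with the adjusted error parameter still yield a $(1+\eps/2)$-approximation of $\opt_\partial$, and hence a $(2+\eps)$-approximation overall --- since all of the genuine geometry (that each hole leans on an upper edge with an endpoint on $\uv$ and is captured by one of the triangles) has already been settled in the structural analysis.
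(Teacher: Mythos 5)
Your proposal is correct and follows essentially the same route as the paper: it combines the PTAS of~\cite{Katz18} for boundary guarding, the containment of every hole in one of the triangles $\triangle xpx'$ of \Cref{alg:main}, the bound $|G'|\le|G|$ via \Cref{clm:singleSegmentOfUV} (at most one upper edge with an endpoint on $\uv$ per guard), and the convexity-removal reduction at $u$ and $v$. Nothing essential is missing.
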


\paragraph{Boundary guards.}
Let $\opt^B$ be the size of a minimum-cardinality set of points on $P$'s boundary (except the interior of the edge $\uv$) that guards $P$ (boundary plus interior), and let $\opt^B_\partial$ be the size of a minimum-cardinality such set that guards $P$'s boundary; clearly, $\opt^B_\partial \le \opt^B \le \opt$. A PTAS for finding a set $G^B$ of points on $(\partial P \setminus \uv) \cup \{u,v\}$ that guards $\partial P$ is described in~\cite{Katz18}, that is, $|G^B| \le (1+\eps)\opt^B_\partial$, for any $\eps > 0$. Its running time is $O(n^{O(1/\eps^2)})$, and it is similar to the corresponding PTAS of Friedrichs et al.~\cite{FHKS16} for the case of 1.5D-terrains. Given the set $G^B$ as input, we can apply our algorithm as is and obtain a set $G'$ of boundary points of size at most $|G^B|$ such that $G^B \cup G'$ guards $P$. Thus, we have $|G^B|+|G'|\le 2 |G^B| \le (2+\eps) \opt^B$.

\begin{corollary}
Given a WV-polygon $P$ (w.r.t to edge $\uv$) with $n$ vertices and $\eps > 0$, one can find in $O(n^{\max\{2,O(1/\eps^2)\}})$ time a set $G$ of points on $(\partial P\setminus \uv) \cup \{u,v\}$, such that $G$ guards $P$ (boundary plus interior) and $G$ is of size at most $(2+\eps)\opt^B$, where $\opt^B$ is the size of a minimum-cardinality such set.
\end{corollary}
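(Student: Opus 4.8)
The plan is to reuse, essentially verbatim, the machinery built for the vertex-guarding theorem, exploiting the fact that the second part of the algorithm never relies on the guarding set being a set of \emph{vertices}. Concretely, I would treat the input to \Cref{alg:main} abstractly as an arbitrary set $G$ of points on $\partial P$ that guards $\partial P$, compute its constructed edges, and run the algorithm exactly as stated. The only way this could fail is if some claim in \Cref{sec:structure} secretly used that the guards sit at vertices, so the first thing I would do is audit each ingredient: \Cref{clm:singleSegmentOfUV} is stated for an arbitrary $p\in P$; the pocket claims \Cref{clm:two_pockets} and \Cref{clm:pocket} are phrased purely in terms of arbitrary constructed edges; and the definition of a hole already quantifies over an arbitrary point set $G\subseteq P$ guarding $\partial P$. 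Hence \Cref{lem:upperEdge}, \Cref{lem:touchUV}, and the final containment argument all go through unchanged when $G$ is a set of boundary points rather than vertices.

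Granting the audit, the concrete steps are these. First, I would invoke the boundary-guard PTAS of~\cite{Katz18} to compute a set $G^B\subseteq(\partial P\setminus\uv)\cup\{u,v\}$ that guards $\partial P$ with $|G^B|\le(1+\eps')\opt^B_\partial$, taking $\eps'=\eps/2$. Second, I would feed $G^B$ into \Cref{alg:main} to obtain $G'$; by \Cref{lem:convexHoleGuard} (or the linear-time procedure that replaces it in the special triangle case) each element of $G'$ is a vertex of $P$, hence an admissible boundary point, and by the correctness argument every hole w.r.t.\ $G^B$ lies in one of the triangles the algorithm guards, so $G^B\cup G'$ guards all of $P$. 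Third, since the algorithm emits at most one guard per upper edge with an endpoint on $\uv$ and there are at most $|G^B|$ such edges, one has $|G'|\le|G^B|$, whence $|G^B\cup G'|\le 2|G^B|\le 2(1+\eps')\opt^B_\partial=(2+\eps)\opt^B_\partial\le(2+\eps)\opt^B$, the last step using $\opt^B_\partial\le\opt^B$. For the running time, the PTAS costs $O(n^{O(1/\eps^2)})$ and \Cref{alg:main} costs $O(n^2)$, which combine to the stated $O(n^{\max\{2,O(1/\eps^2)\}})$.

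The main obstacle I anticipate is not new geometry but the care required in the audit above: I must confirm that no statement in \Cref{sec:structure} ever invokes a property available only at vertices, and that the convexity-removal step remains compatible with the boundary setting. In particular I would check that placing guards at $u$ and $v$ is permissible here (both lie in $\{u,v\}$) and that the construction never forces a guard into the interior of $\uv$, so that $G^B\cup G'$ indeed lies in $(\partial P\setminus\uv)\cup\{u,v\}$. Once these points are verified, the $\eps$-rescaling and the cardinality bound are routine, and the corollary follows as the boundary-guard analogue of the vertex-guard theorem.
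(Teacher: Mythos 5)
Your proposal is correct and follows essentially the same route as the paper: the paper's own justification is exactly the observation that \Cref{alg:main} and the structural claims of \Cref{sec:structure} never use that the guards are vertices, so one can feed the boundary-guard PTAS output $G^B$ into the algorithm as is and conclude $|G^B|+|G'|\le 2|G^B|\le(2+\eps)\opt^B$. Your explicit rescaling $\eps'=\eps/2$ and the check that the emitted guards (being vertices) avoid the interior of $\uv$ are careful touches the paper leaves implicit, but they do not change the argument.
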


\section{Polygons weakly visible from a chord}
\label{sec:chord}
A \emph{chord} in a simple polygon $P$ is a line segment whose endpoints are on the boundary of $P$ and whose interior is contained in the interior of $P$. In particular, any diagonal of $P$ is a chord in $P$. 

In this section, we show that our method can be extended to the case where $P$ is weakly visible from a chord $\uv$ (in $P$), i.e., every point in $P$ is visible from some point on $\uv$. 
More precisely, we show that given a set $G$ of vertices that guards the boundary of such a polygon $P$, one can find a set $G'$ of size at most $2|G|$ such that $I = G \cup G'$ guards $P$ (boundary plus interior).
Thus, given a $c$-approximation algorithm for guarding the boundary of a polygon $P$ weakly visible from a chord, we provide a $3c$-approximation algorithm for guarding $P$.

The chord $\uv$ slices $P$ into two (sub)polygons, where each of them is a WV-polygon w.r.t. the edge $uv$ (see \Cref{fig:visiblityPolygonChord}). Thus, we would like to apply \Cref{alg:main} to each of these polygons separately.

\begin{figure}[h]
	\centering
	\includegraphics[scale=1]{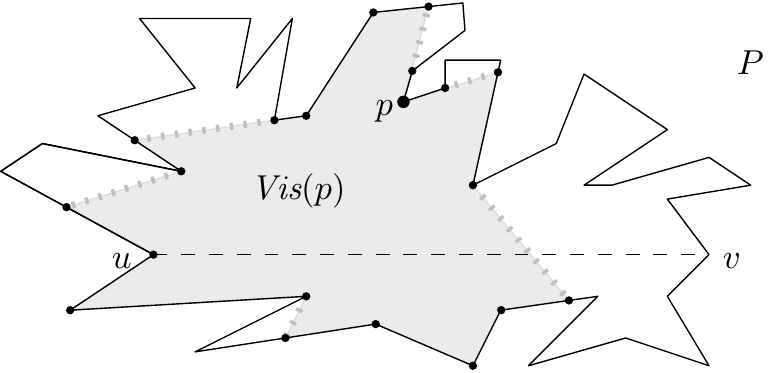}
	\caption{The visibility polygon of a point $p\in P$.}
	\label{fig:visiblityPolygonChord}
\end{figure}

The definitions of visibility polygons, pockets, and holes, apply with no change to polygons weakly visible from a chord. However, in order for the rest of our claims to be correct, we have to update \Cref{clm:singleSegmentOfUV} and \Cref{clm:pocket}.
First, we replace \Cref{clm:singleSegmentOfUV} by the following claim; the proof remains unchanged.
\begin{claim}\label{clm:singleSegmentOfUVChord}
	For any point $p\in P$, $\vis(p) \cap \uv$ is connected.
\end{claim}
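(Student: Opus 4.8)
The plan is to run exactly the argument used for \Cref{clm:singleSegmentOfUV}, reinterpreting the one place where the hypothesis ``$\uv$ is an edge'' was silently used so that it instead invokes the defining property of a chord. First I would record that $\vis(p)\cap\uv\neq\emptyset$: since $P$ is weakly visible from the chord $\uv$, the point $p$ is seen by some point of $\uv$, so that point lies in $\vis(p)\cap\uv$. Parametrizing $\uv$ linearly from $u$ to $v$, and using that $\vis(p)$ is closed (it is a star-shaped, hence closed, polygon), the set $\vis(p)\cap\uv$ is a closed bounded subset of the segment, so it has a first point $a$ and a last point $b$ (along the parametrization). By choice of $a$ and $b$ we have $\vis(p)\cap\uv\subseteq\seg{ab}$, so it suffices to prove the reverse inclusion $\seg{ab}\subseteq\vis(p)$; this forces $\vis(p)\cap\uv=\seg{ab}$, which is connected.

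For the reverse inclusion I would argue by contradiction, as in the original proof. Suppose some $c$ in the relative interior of $\seg{ab}$ is not visible from $p$. Then $\seg{pc}\not\subseteq P$, so there is a point $d\in\partial P$ lying in the relative interior of $\seg{pc}$. Consider the triangle $\triangle pab$, whose sides are $\seg{pa}$, $\seg{pb}$, and $\seg{ab}$. The cevian $\seg{pc}$ is contained in $\triangle pab$, and because $c$ lies strictly between $a$ and $b$, the relative interior of $\seg{pc}$ lies in the open interior of the triangle; in particular $d$ is an interior point of $\triangle pab$. The crux is to show that $\partial P$ contains no point in the open interior of $\triangle pab$, which contradicts the existence of $d$. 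Since $\partial P$ is a simple closed curve, any interior point of it in $\triangle pab$ would force $\partial P$ to cross $\partial(\triangle pab)=\seg{pa}\cup\seg{pb}\cup\seg{ab}$. But $\seg{pa}$ and $\seg{pb}$ are contained in $P$ (as $p$ sees both $a$ and $b$), so $\partial P$ cannot cross them transversally into the interior; and $\partial P$ cannot cross $\seg{ab}$ either. Hence $\partial P$ cannot enter the interior of $\triangle pab$, and we reach the desired contradiction, so every $c\in\seg{ab}$ is visible from $p$.

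The single step that must be re-verified for the chord case — and the only real obstacle — is the claim that $\partial P$ does not meet the relative interior of $\seg{ab}$. In \Cref{clm:singleSegmentOfUV} this was automatic because $\seg{ab}$ lay on the edge $\uv\subseteq\partial P$, so simplicity of $\partial P$ forbade a crossing there. When $\uv$ is a chord this reasoning is replaced by the definition of a chord: the relative interior of $\uv$ lies in the interior of $P$, hence so does the relative interior of $\seg{ab}\subseteq\uv$, and therefore $\partial P$ cannot touch it at all. With this one substitution the triangle argument goes through verbatim, so the proof is indeed unchanged in structure. (The endpoints $a,b$ may coincide with $u$ or $v$ and thus lie on $\partial P$, but this affects only the vertices of $\triangle pab$, not the three open sides across which $\partial P$ would have to cross, so it causes no difficulty.)
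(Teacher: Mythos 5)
Your proof is correct and follows essentially the same route as the paper, which simply reuses the proof of \Cref{clm:singleSegmentOfUV} verbatim ("the proof remains unchanged"). Your added observation — that the only step relying on $\uv$ being an edge is the impossibility of $\partial P$ crossing the relative interior of $\seg{ab}$, and that this now follows from the interior of a chord lying in the interior of $P$ — is exactly the right justification for why the unchanged proof still works.
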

Notice that by \Cref{clm:singleSegmentOfUVChord}, $\vis(p)$ has at most two constructed edges that \textbf{cross} $\uv$, and again at most one of these edges is an upper edge.
Next, we replace \Cref{clm:pocket} by the following two claims.
\begin{claim}\label{clm:pocket2}
	Let $\xy$ be a constructed edge of $\vis(p)$, such that $x$ is below $y$ and $\xy$ is strictly above $\uv$, then $C_{xy}$ lies strictly above $\ell_x$, i.e., every point in $C_{xy}$ is above $\ell_x$.
\end{claim}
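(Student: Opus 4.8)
The plan is to adapt the proof of the strict part of \Cref{clm:pocket} (the case $x\notin\uv$), whose mechanism carries over almost verbatim once the role of the guarding edge is replaced by the chord. As there, I set up coordinates so that $\uv$ lies on the $x$-axis; since $\xy$ is strictly above $\uv$, both endpoints satisfy $x_y,y_y>0$, and in particular $x_y>0$. I then fix an arbitrary point $z\in C_{xy}$. Because $P$ is weakly visible from the chord $\uv$, there is a point $z'\in\uv$ with $\seg{zz'}\subseteq P$, and $z'$ has $y$-coordinate $0$. The whole argument reduces to showing that $\seg{zz'}$ crosses the constructed edge $\xy$: granting this, the crossing point $w$ lies on $\xy$, so $w_y\ge x_y>0=z'_y$ (as $x$ is the lower endpoint), and since $w$ lies strictly between $z'$ and $z$ on a segment whose $y$-coordinate is an increasing linear function of the parameter, $z_y>w_y\ge x_y$, which is precisely the assertion that $z$ lies strictly above $\ell_x$.

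The crux, and the step that differs from \Cref{clm:pocket}, is establishing that $\seg{zz'}$ meets $\xy$. Recall that a constructed edge is a chord of $P$, so $\xy$ divides $P$ into the pocket $C_{xy}$ and its complement $P\setminus C_{xy}$; hence it suffices to prove $z'\notin C_{xy}$, as then the segment from $z\in C_{xy}$ to $z'\in P\setminus C_{xy}$ must cross the dividing chord. Since $\xy$ is strictly above $\uv$, the segment $\xy$ is disjoint from $\uv$, so the connected set $\uv$ lies entirely in one of the two parts. To exclude $\uv\subseteq C_{xy}$, I invoke weak visibility together with the symmetry of visibility: by weak visibility $p$ itself is seen by some point $z_p\in\uv$, so $z_p\in\vis(p)$ and therefore $z_p\notin C_{xy}$; thus $\uv\not\subseteq C_{xy}$, forcing $\uv\subseteq P\setminus C_{xy}$ and in particular $z'\notin C_{xy}$.

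The main obstacle is exactly this last point. In the edge setting, \Cref{clm:pocket} ruled out the chord/edge meeting the pocket via the observation that otherwise $\uv\subseteq\partial C_{xy}$ would make $p$ invisible from $\uv$; that observation relied on $\uv$ being part of $\partial P$ and on $P$ lying above $\ell_{uv}$, neither of which holds for a chord. The replacement above sidesteps both: it uses only that the strictly higher position of $\xy$ makes it disjoint from $\uv$ (so $\uv$ cannot straddle the dividing chord) and that a genuinely invisible pocket cannot swallow the entire chord, since some chord point must see $p$. Everything else---the height comparison along $\seg{zz'}$---is identical to \Cref{clm:pocket}, and for this particular claim the connectivity statement of \Cref{clm:singleSegmentOfUVChord} is not needed.
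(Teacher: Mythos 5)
Your proof is correct and follows essentially the same route as the paper's: both first establish $C_{xy}\cap\uv=\emptyset$ (the paper by noting that $\uv\subseteq C_{xy}$ would make $p$ invisible from $\uv$, you by the contrapositive that the chord point seeing $p$ lies in $\vis(p)$ and hence outside the pocket), and then run the identical height-comparison along the segment $\seg{zz'}$ crossing $\xy$. Your explicit justification that $\uv$, being disjoint from $\xy$, must lie wholly on one side is a detail the paper leaves implicit, but the argument is the same.
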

\begin{proof}
	First, observe that $C_{xy} \cap \uv = \emptyset$, otherwise, it follows that $\uv\subseteq C_{xy}$, and since the entire pocket $C_{xy}$ is not visible from $p$, we get that $p$ is not visible from $\uv$, which contradicts the fact that $P$ is a WV-polygon.
	
	Let $z$ be any point in $C_{xy}$. Since $P$ is a WV-polygon, there exists a point $z'$ on $\uv$ such that $\seg{zz'}\subseteq P$. The segment $\seg{zz'}$ has to cross $\xy$, because $z\in C_{xy}$ and $z'\notin C_{xy}$. Since $x$ is below $y$, this crossing point is above $x$, which in turn is above $z'$, so we get that $z$ is above $x$. We conclude that $C_{xy}$ lies strictly above $\ell_x$.
\end{proof}

\begin{claim}\label{clm:pocket3}
	Let $\xy$ be a constructed edge of $\vis(p)$, such that $x$ is below $y$ and $\xy$ crosses $\uv$ at a point $w$, then $\ell_{uv}$ intersects $\partial C_{xy}$ at a single point which is either $u$ or $v$.
\end{claim}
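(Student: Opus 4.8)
The plan is to describe $\partial C_{xy}$ as a single boundary arc and then track its intersections with the line $\ell_{uv}$. Since $\xy$ crosses the chord $\uv\subseteq\ell_{uv}$ at $w$ and $x$ is below $y$, the endpoint $x$ lies strictly below $\ell_{uv}$ and $y$ strictly above it. As in the pocket analysis, $\partial C_{xy}$ is the single connected sub-arc $A$ of $\partial P$ joining $x$ to $y$ (the constructed edge $\xy$ itself is excluded). Being connected and running from below $\ell_{uv}$ to above it, $A$ meets $\ell_{uv}$ at least once, which settles existence. To see that a meeting point must be $u$ or $v$, I would first locate the pocket relative to $\ell_{xy}$: this line is non-horizontal and meets $\ell_{uv}$ only at $w\in(u,v)$, so $u$ and $v$ lie on opposite sides of $\ell_{xy}$, and hence exactly one of them — say $u$, after possibly renaming — lies on the same side of $\ell_{xy}$ as $C_{xy}$.

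Next I would show $u\in\partial C_{xy}$. By \Cref{clm:singleSegmentOfUVChord}, $\vis(p)\cap\uv$ is a single sub-segment of the chord, and since $w$ lies on the constructed edge $\xy$, it is an endpoint of this visible sub-segment. Consequently the relatively open portion of the chord running from $w$ toward $u$ is disjoint from $\vis(p)$, lies in the interior of $P$, and lies on the $C_{xy}$-side of $\ell_{xy}$; being a connected set of unseen interior points adjacent to $C_{xy}$ at $w$, it is contained in $C_{xy}$. Travelling along the chord from $w$ toward $u$ we therefore remain in $C_{xy}$, and since the relative interior of the chord lies in the interior of $P$, we first meet $\partial P$ exactly at the endpoint $u$. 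Hence $u\in\partial C_{xy}\cap\ell_{uv}$, confirming that the intersection is nonempty and contains one of $u,v$.

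It remains to prove uniqueness, namely that $A$ meets $\ell_{uv}$ nowhere else, and this is the step I expect to be the main obstacle. Because $u$ and $v$ are the only points common to the lower and upper boundary chains of $P$ (the two arcs into which the chord endpoints split $\partial P$), and $A$ is a simple arc running from $x$ (in the lower chain) to $y$ (in the upper chain) that passes through $u$, it does not pass through $v$. Thus any second crossing $q\in A\cap\ell_{uv}$ would satisfy $q\in\partial P$, $q\notin(u,v)$ (the relative interior of the chord is interior to $P$), and $q$ on the $C_{xy}$-side of $\ell_{xy}$, which forces $q$ to lie on $\ell_{uv}$ strictly beyond $u$. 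Such a $q$ can only occur when $P$ is non-convex near $u$ and its boundary swings back across $\ell_{uv}$ outside the chord; the heart of the argument is to exclude that such a re-crossing lies on $\partial C_{xy}$. I expect this to be the delicate point, since a parity argument on the simple closed curve $\xy\cup A$ (which $\ell_{uv}$ crosses exactly once on $\xy$, at $w$, hence an odd number of times on $A$) only yields ``odd'', not ``exactly one''.

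To upgrade this, I would exploit the weak-visibility property together with the fact that $C_{xy}$ is precisely the region of $P$ hidden behind the single window $\xy$ as seen from $p$: a boundary point $q$ of $C_{xy}$ lying beyond $u$ would require the hidden region to wrap past $u$, while we have just pinned the chord's intersection with $C_{xy}$ to the single interval ending at $u$; I would argue that any pocket point beyond $u$ must be seen from a chord point $z'\in\uv$ via a segment that is blocked near $u$, contradicting $\seg{q z'}\subseteq P$. Where it is cleaner, I would carry out this analysis inside the two sub-polygons $P^{+}$ and $P^{-}$ obtained by slicing $P$ along $\uv$ (each weakly visible from the edge $\uv$), applying the adapted one-sided statements of \Cref{clm:pocket} and \Cref{clm:pocket2} to the parts $C_{xy}^{+}$ and $C_{xy}^{-}$ of the pocket so as to keep each boundary chain strictly on its own side of $\ell_{uv}$ away from the chord. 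The symmetric case, where the pocket lies on the $v$-side of $\ell_{xy}$, gives $v$ in place of $u$.
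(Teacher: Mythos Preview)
Your existence half is fine and more explicit than the paper's: you pin down which of $u,v$ lies on the pocket side of $\ell_{xy}$ and use \Cref{clm:singleSegmentOfUVChord} to push the half-chord from $w$ to that endpoint into $C_{xy}$. The gap is exactly where you flag it, in the uniqueness step, and the paper closes it in one stroke with a \emph{global} observation you never isolate: weak visibility from $\uv$ forces $\ell_{uv}\cap\partial P=\{u,v\}$. Any boundary point $q\in\ell_{uv}\setminus\{u,v\}$ could only see a chord point along a segment lying in $\ell_{uv}$ itself, and the paper records that such a $q$ is therefore not visible from $\uv$, contradicting weak visibility. Since $\partial C_{xy}\subseteq\partial P$, this yields $\ell_{uv}\cap\partial C_{xy}\subseteq\{u,v\}$ immediately, and your own remark that $u$ and $v$ lie on opposite sides of $\ell_{xy}$ then gives ``exactly one''.

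You graze this idea when you say a pocket point beyond $u$ ``must be seen from a chord point $z'\in\uv$ via a segment that is blocked near $u$'', but you cast it as a property of \emph{pocket} points rather than as a constraint on all of $\partial P$ along $\ell_{uv}$, and you surround it with parity counts on the Jordan curve $\xy\cup A$, upper/lower chain bookkeeping, and the slicing into $P^{+},P^{-}$ via \Cref{clm:pocket2}. None of that machinery resolves the issue: the parity argument yields only ``odd'' (as you concede), and the one-sided claims do not prevent $\partial C_{xy}$ from meeting $\ell_{uv}$ again beyond $u$. Promote your blocked-segment remark to the blanket statement $\ell_{uv}\cap\partial P=\{u,v\}$ and the entire uniqueness discussion collapses to a line.
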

\begin{proof}
	$\uv$ intersects $\partial C_{xy}$ at a single point which is either $u$ or $v$, since $\partial C_{xy} \subseteq \partial P$ and $\uv$ is a chord. Now, if $\ell_{uv}$ intersects $\partial P$ at a point different than $u,v$, then this point is not visible from $\uv$.
\end{proof}

Denote by $P_1$ the polygon above $\uv$ and by $P_2$ the polygon below $\uv$.
As in the previous section, let $E$ be the set of the constructed edges of the visibility polygons of the vertices in $G$. 
We run \Cref{alg:main} on $P_1$ and $P_2$ separately, to find a set of vertices $G'$ (of $P_1$) such that $G\cup G'$ guards $P_1$ (boundary plus interior), and set of vertices $G''$ (of $P_2$) such that $G\cup G''$ guards $P_2$. Both $G'$ and $G''$ are of size at most $|G|$, and thus $G\cup G'\cup G''$ is a set of vertices of size at most $3|G|$ that guards $P$.

When running the algorithm on $P_1$, we ignore the existence of $P_2$. That is, if an edge $\xy\in E$ crosses $\uv$, then we only consider its part that is contained in $P_1$ (and the part of its associated pocket that is contained in $P_1$). Similarly, if a hole $H$ lies on both sides of $\uv$, we only consider its part that is contained in $P_1$. (Notice that the role of an edge $\xy\in E$ that crosses $\uv$ switches from an upper edge in $P_1$ to a lower edge in $P_2$, or vice versa. So, each such edge is an upper edge in either $P_1$ or $P_2$.) Moreover, in this case we get a hole in $P_1$ that has an edge which is not a constructed edge of a visibility polygon. But this does not cause a problem, since such holes must also have upper edges, and furthermore, due to \Cref{clm:pocket2} we get that each of these upper edges has an endpoint on $\overline{uv}$.
Finally, notice that after computing $E$, we do not use the set $G$ in our algorithm and proofs; it only reappears as one of the three sets whose union is the final guarding set.

\begin{theorem}
	Given a polygon $P$ with $n$ vertices that is weakly visible from a chord, and a set $G$ of vertices of $P$ (or a set $G$ of points on $\partial P$) that guards $\partial P$, one can find in polynomial time a set $G'$ of size at most $2|G|$ such that $G \cup G'$ guards $P$ (boundary plus interior). If $G$ is a $c$-approximation for guarding $\partial P$, then $G \cup G'$ is a $3c$-approximation for guarding the entire polygon $P$.
\end{theorem}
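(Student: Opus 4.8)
The plan is to reduce the chord case to the edge case of \Cref{sec:algorithm} by cutting $P$ along the chord $\uv$ into the subpolygon $P_1$ lying above $\uv$ and the subpolygon $P_2$ lying below it, each of which is a WV-polygon with respect to the edge $\uv$. I would first compute, exactly as before, the set $E$ of constructed edges of the visibility polygons of the vertices of $G$, and then run \Cref{alg:main} once on $P_1$ and once on $P_2$. In each run I treat $\uv$ as the distinguished edge, and whenever an edge of $E$ (or a pocket, or a hole) crosses $\uv$, I restrict attention to the part lying on the side currently being processed. The first run returns a set $G'$ with $G\cup G'$ guarding $P_1$, and the second returns a set $G''$ with $G\cup G''$ guarding $P_2$; the set called $G'$ in the theorem statement is their union $G'\cup G''$, and the whole guarding set is $G\cup G'\cup G''$. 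Since each run is a single application of \Cref{alg:main} on top of a single computation of $E$, the procedure clearly runs in polynomial time.

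For the cardinality bound I would repeat the counting argument from the running-time paragraph of \Cref{sec:algorithm}, substituting \Cref{clm:singleSegmentOfUVChord} for \Cref{clm:singleSegmentOfUV}: since $\vis(p)\cap\uv$ is connected, each visibility polygon contributes at most two constructed edges crossing $\uv$, of which at most one is an upper edge on the side being processed. As \Cref{alg:main} adds at most one guard per such upper edge, each run contributes at most $|G|$ new vertices, so $|G'|\le|G|$, $|G''|\le|G|$, and hence $|G'\cup G''|\le 2|G|$ and $|G\cup G'\cup G''|\le 3|G|$. The approximation statement is then immediate: if $G$ is a $c$-approximation for guarding $\partial P$, then $|G|\le c\cdot\opt_\partial\le c\cdot\opt$, whence $|G\cup G'\cup G''|\le 3c\cdot\opt$.

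The substantive work is correctness, namely that every hole $H$ in $P$ w.r.t.\ $G$ is contained in one of the triangles produced by the two runs. For a hole lying entirely on one side of $\uv$ I would replay verbatim the chain consisting of \Cref{lem:upperEdge} and \Cref{lem:touchUV}, with the roles formerly played by \Cref{clm:pocket} now taken over by \Cref{clm:pocket2} (for constructed edges strictly above $\uv$) and \Cref{clm:pocket3} (for constructed edges crossing $\uv$), while \Cref{clm:two_pockets} is used unchanged. For a hole $H$ that straddles $\uv$ I would split it into $H_1=H\cap P_1$ and $H_2=H\cap P_2$ and process each piece in its own run. The one genuinely new phenomenon is that $H_1$ may have a boundary edge lying on $\uv$ that is not a constructed edge; the obstacle is to show that $H_1$ nevertheless leans on an upper edge with an endpoint on $\uv$, so that \Cref{alg:main} still produces a covering triangle. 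This is exactly where \Cref{clm:pocket2} is used in the contrapositive: $H_1$ still leans on at least one upper edge $\xy$, and if $\xy$ were strictly above $\uv$ then \Cref{clm:pocket2} would force its pocket, and hence $H_1\subseteq C_{xy}$, to lie strictly above $\ell_x$, contradicting the fact that $H_1$ meets $\uv$; therefore the upper edge must meet $\uv$.

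I expect the main obstacle to be the bookkeeping surrounding edges of $E$ that cross $\uv$. One must check that such an edge is an upper edge in exactly one of $P_1,P_2$ (and a lower edge in the other), so that it is examined on the correct side and contributes its triangle there, and that the triangle found on that side actually contains the corresponding piece $H_1$ or $H_2$ of a straddling hole; \Cref{clm:pocket3}, which pins the intersection of $\ell_{uv}$ with the relevant pocket boundary to the single point $u$ or $v$, is what prevents the truncation at $\uv$ from creating spurious upper edges away from the chord. Once these structural points are verified, the correctness proof of \Cref{sec:algorithm} transfers with only the stated substitutions, and the theorem follows.
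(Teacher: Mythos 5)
Your proposal matches the paper's own treatment essentially step for step: slicing along the chord into $P_1$ and $P_2$, substituting \Cref{clm:singleSegmentOfUVChord} for \Cref{clm:singleSegmentOfUV} and \Cref{clm:pocket2}/\Cref{clm:pocket3} for \Cref{clm:pocket}, running \Cref{alg:main} on each side with crossing edges and straddling holes truncated at $\uv$, and using \Cref{clm:pocket2} to force the upper edges of a truncated hole to meet $\uv$. The counting ($|G'|,|G''|\le|G|$, hence a $3c$-approximation) is also the paper's argument, so the proof is correct and takes the same route.
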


\bibliography{refs}

\begin{thebibliography}{10}

\bibitem{AAM18}
Mikkel Abrahamsen, Anna Adamaszek, and Tillmann Miltzow.
\newblock The art gallery problem is {\(\exists\)} {{\(\mathbb{R}\)}}-complete.
\newblock In {\em Proceedings of the 50th Annual {ACM} {SIGACT} Symposium on
  Theory of Computing, {STOC} 2018, Los Angeles, CA, USA, June 25-29, 2018},
  pages 65--73, 2018.
\newblock URL: \url{https://doi.org/10.1145/3188745.3188868}, \href
  {http://dx.doi.org/10.1145/3188745.3188868}
  {\path{doi:10.1145/3188745.3188868}}.

\bibitem{Aggarwal84}
Alok Aggarwal.
\newblock The art gallery theorem: its variations, applications and algorithmic
  aspects.
\newblock 1984.

\bibitem{AT81}
David Avis and Godfried~T. Toussaint.
\newblock An optimal algorithm for determining the visibility of a polygon from
  an edge.
\newblock {\em {IEEE} Trans. Computers}, 30(12):910--914, 1981.
\newblock URL: \url{https://doi.org/10.1109/TC.1981.1675729}, \href
  {http://dx.doi.org/10.1109/TC.1981.1675729}
  {\path{doi:10.1109/TC.1981.1675729}}.

\bibitem{bhattacharya2017constant}
Pritam Bhattacharya, Subir~Kumar Ghosh, and Sudebkumar Pal.
\newblock Constant approximation algorithms for guarding simple polygons using
  vertex guards.
\newblock {\em arXiv:1712.05492}, 2017.

\bibitem{BGPR18}
Pritam Bhattacharya, Subir~Kumar Ghosh, Sudebkumar Pal, and Bodhayan Roy.
\newblock Personal communication, 2018.

\bibitem{BGR17}
Pritam Bhattacharya, Subir~Kumar Ghosh, and Bodhayan Roy.
\newblock Approximability of guarding weak visibility polygons.
\newblock {\em Discrete Applied Mathematics}, 228:109--129, 2017.
\newblock URL: \url{https://doi.org/10.1016/j.dam.2016.12.015}, \href
  {http://dx.doi.org/10.1016/j.dam.2016.12.015}
  {\path{doi:10.1016/j.dam.2016.12.015}}.

\bibitem{BM17}
{\'{E}}douard Bonnet and Tillmann Miltzow.
\newblock An approximation algorithm for the art gallery problem.
\newblock In {\em 33rd International Symposium on Computational Geometry, SoCG
  2017, July 4-7, 2017, Brisbane, Australia}, pages 20:1--20:15, 2017.
\newblock URL: \url{https://doi.org/10.4230/LIPIcs.SoCG.2017.20}, \href
  {http://dx.doi.org/10.4230/LIPIcs.SoCG.2017.20}
  {\path{doi:10.4230/LIPIcs.SoCG.2017.20}}.

\bibitem{DHN94}
Gautam Das, Paul~J. Heffernan, and Giri Narasimhan.
\newblock Finding all weakly-visible chords of a polygon in linear time.
\newblock {\em Nord. J. Comput.}, 1(4):433--457, 1994.

\bibitem{DKDS07}
Ajay Deshpande, Taejung Kim, Erik~D. Demaine, and Sanjay~E. Sarma.
\newblock A pseudopolynomial time {$O(\log n)$}-approximation algorithm for art
  gallery problems.
\newblock In {\em Algorithms and Data Structures, 10th International Workshop,
  {WADS} 2007, Halifax, Canada, August 15-17, 2007, Proceedings}, pages
  163--174, 2007.
\newblock URL: \url{https://doi.org/10.1007/978-3-540-73951-7\_15}, \href
  {http://dx.doi.org/10.1007/978-3-540-73951-7\_15}
  {\path{doi:10.1007/978-3-540-73951-7\_15}}.

\bibitem{EH06}
Alon Efrat and Sariel Har{-}Peled.
\newblock Guarding galleries and terrains.
\newblock {\em Inf. Process. Lett.}, 100(6):238--245, 2006.
\newblock URL: \url{https://doi.org/10.1016/j.ipl.2006.05.014}, \href
  {http://dx.doi.org/10.1016/j.ipl.2006.05.014}
  {\path{doi:10.1016/j.ipl.2006.05.014}}.

\bibitem{EidenbenzSW01}
Stephan Eidenbenz, Christoph Stamm, and Peter Widmayer.
\newblock Inapproximability results for guarding polygons and terrains.
\newblock {\em Algorithmica}, 31(1):79--113, 2001.
\newblock URL: \url{https://doi.org/10.1007/s00453-001-0040-8}, \href
  {http://dx.doi.org/10.1007/s00453-001-0040-8}
  {\path{doi:10.1007/s00453-001-0040-8}}.

\bibitem{FHKS16}
Stephan Friedrichs, Michael Hemmer, James King, and Christiane Schmidt.
\newblock The continuous 1.5{D} terrain guarding problem: Discretization,
  optimal solutions, and {PTAS}.
\newblock {\em JoCG}, 7(1):256--284, 2016.
\newblock URL: \url{https://doi.org/10.20382/jocg.v7i1a13}, \href
  {http://dx.doi.org/10.20382/jocg.v7i1a13} {\path{doi:10.20382/jocg.v7i1a13}}.

\bibitem{Ghosh87}
Subir~Kumar Ghosh.
\newblock Approximation algorithms for art gallery problems.
\newblock pages 429--434, 1997.

\bibitem{Ghosh10}
Subir~Kumar Ghosh.
\newblock Approximation algorithms for art gallery problems in polygons.
\newblock {\em Discrete Applied Mathematics}, 158(6):718--722, 2010.
\newblock URL: \url{https://doi.org/10.1016/j.dam.2009.12.004}, \href
  {http://dx.doi.org/10.1016/j.dam.2009.12.004}
  {\path{doi:10.1016/j.dam.2009.12.004}}.

\bibitem{GMPSM93}
Subir~Kumar Ghosh, Anil Maheshwari, Sudebkumar~Prasant Pal, Sanjeev Saluja, and
  C.~E.~Veni Madhavan.
\newblock Characterizing and recognizing weak visibility polygons.
\newblock {\em Comput. Geom.}, 3:213--233, 1993.
\newblock URL: \url{https://doi.org/10.1016/0925-7721(93)90010-4}, \href
  {http://dx.doi.org/10.1016/0925-7721(93)90010-4}
  {\path{doi:10.1016/0925-7721(93)90010-4}}.

\bibitem{Gibson14}
Matt Gibson, Gaurav Kanade, Erik Krohn, and Kasturi Varadarajan.
\newblock Guarding terrains via local search.
\newblock {\em Journal of Computational Geometry}, 5(1):168--178, 2014.

\bibitem{JoeS87}
B.~Joe and R.~B. Simpson.
\newblock Corrections to {L}ee's visibility polygon algorithm.
\newblock {\em {BIT}}, 27(4):458--473, 1987.

\bibitem{Katz18}
Matthew~J. Katz.
\newblock A {PTAS} for vertex guarding weakly-visible polygons --- an extended
  abstract.
\newblock {\em CoRR}, abs/1803.02160, 2018.
\newblock URL: \url{http://arxiv.org/abs/1803.02160}, \href
  {http://arxiv.org/abs/1803.02160} {\path{arXiv:1803.02160}}.

\bibitem{Ke87}
Yan Ke.
\newblock Detecting the weak visibility of a simple polygon and related
  problems.
\newblock In {\em Technical report}. Johns Hopkins University, 1987.

\bibitem{KK11}
James King and David~G. Kirkpatrick.
\newblock Improved approximation for guarding simple galleries from the
  perimeter.
\newblock {\em Discrete {\&} Computational Geometry}, 46(2):252--269, 2011.
\newblock URL: \url{https://doi.org/10.1007/s00454-011-9352-x}, \href
  {http://dx.doi.org/10.1007/s00454-011-9352-x}
  {\path{doi:10.1007/s00454-011-9352-x}}.

\bibitem{KN13}
Erik Krohn and Bengt Nilsson.
\newblock Approximate guarding of monotone and rectilinear polygons.
\newblock {\em Algorithmica}, 66:564--594, 07 2013.
\newblock \href {http://dx.doi.org/10.1007/s00453-012-9653-3}
  {\path{doi:10.1007/s00453-012-9653-3}}.

\bibitem{LL86}
D~Lee and Arthur K.~Lin.
\newblock Computational complexity of art gallery problems.
\newblock {\em IEEE Transactions on Information Theory}, 32:276--282, 03 1986.
\newblock \href {http://dx.doi.org/10.1109/TIT.1986.1057165}
  {\path{doi:10.1109/TIT.1986.1057165}}.

\bibitem{Lee83}
D.~T. Lee.
\newblock Visibility of a simple polygon.
\newblock {\em Computer Vision, Graphics, and Image Processing},
  22(2):207--221, 1983.
\newblock URL: \url{https://doi.org/10.1016/0734-189X(83)90065-8}, \href
  {http://dx.doi.org/10.1016/0734-189X(83)90065-8}
  {\path{doi:10.1016/0734-189X(83)90065-8}}.

\bibitem{MR09}
Nabil~Hassan Mustafa and Saurabh Ray.
\newblock {PTAS} for geometric hitting set problems via local search.
\newblock In {\em Proceedings of the 25th Annual Symposium on Computational
  Geometry}, pages 17--22. ACM, 2009.

\bibitem{O'rourke87}
Joseph O'rourke.
\newblock {\em Art gallery theorems and algorithms}, volume~57.
\newblock Oxford University Press Oxford, 1987.

\bibitem{ORourkeS83}
Joseph O'Rourke and Kenneth~J. Supowit.
\newblock Some {NP}-hard polygon decomposition problems.
\newblock {\em {IEEE} Trans. Information Theory}, 29(2):181--189, 1983.
\newblock URL: \url{https://doi.org/10.1109/TIT.1983.1056648}, \href
  {http://dx.doi.org/10.1109/TIT.1983.1056648}
  {\path{doi:10.1109/TIT.1983.1056648}}.

\bibitem{SS88}
J{\"{o}}rg{-}R{\"{u}}diger Sack and Subhash Suri.
\newblock An optimal algorithm for detecting weak visibility of a polygon
  (preliminary version).
\newblock In {\em {STACS} 88, 5th Annual Symposium on Theoretical Aspects of
  Computer Science, Bordeaux, France, February 11-13, 1988, Proceedings}, pages
  312--321, 1988.
\newblock URL: \url{https://doi.org/10.1007/BFb0035855}, \href
  {http://dx.doi.org/10.1007/BFb0035855} {\path{doi:10.1007/BFb0035855}}.

\end{thebibliography}

\end{document}